\documentclass[smallextended]{svjour3-arxiv}     

\usepackage{amsmath}

\usepackage{amsthm}

\smartqed  
\usepackage{graphicx}
\usepackage{wrapfig}
\usepackage{url}
\newcommand\eat[1]{}
\usepackage{tikz}
\usepackage{booktabs}  
\usetikzlibrary{arrows}
\usepackage{tablefootnote}
\usepackage{slashbox}
	
%
%
%
%
 
  
        \journalname{Working Paper}

\usepackage{array,xspace,multirow,hhline,graphicx,xcolor,tikz,colortbl,tabularx,amsmath,amssymb,amsfonts}
\usepackage[round]{natbib}
\usetikzlibrary{shapes}
\usepackage{algorithm,algorithmic}
\usepackage{mathrsfs}
\usepackage{enumitem}
\setenumerate[1]{label=(\emph{\roman{*}}),ref=(\emph{\roman{*}}),leftmargin=*}

\usepackage{txfonts}
\usepackage{eucal} 




\newcommand{\pref}{\succsim\xspace}

\usepackage{tikz}



\newcommand{\midd}{\mathbin{: }}

\usepackage{mathtools}
\DeclarePairedDelimiter\ceil{\lceil}{\rceil}

%
%




	\newcommand{\Pref}[1][]{
		\ifthenelse{\equal{#1}{}}{\mathrel \succsim}{\mathop{R_{#1}}}
	}    
				\newcommand{\spref}{\ensuremath{\succ}}                                      
	\newcommand{\sPref}[1][]{                  
		\ifthenelse{\equal{#1}{}}{\mathrel \succ}{\mathop{P_{#1}}}
	}                                          
	\newcommand{\Indiff}[1][]{                 
		\ifthenelse{\equal{#1}{}}{\mathrel \sim}{\mathop{\sim_{#1}}}
	}
	\newcommand{\prefset}[1][]{\ifthenelse{\equal{#1}{}}{\mathcal{R}}{\mathcal{R}_{#1}}}
	
	\newcommand{\indiff}{\mathbin \sim\xspace}

%


\let\enumtemp=\enumerate
\def\enumerate{\enumtemp\itemsep 1pt}
\let\itemtemp=\itemize
\def\itemize{\itemtemp\itemsep 1pt}

\newcommand{\Omit}[1]{}

            \newcommand{\B}{\mathcal{B}}
             \newcommand{\W}{\mathcal{W}}

	\def\ej{\color{black}}
	
	\newcommand{\lang}[1]{}

\bibliographystyle{plainnat}


\sloppy

\begin{document}

	\title{Computing and Testing Pareto Optimal Committees}

	\author{Haris Aziz \and J{\'{e}}r{\^{o}}me Lang \and J{\'{e}}r{\^{o}}me Monnot}

	\institute{%
	  H. Aziz  \at
	 Data61, CSIRO and UNSW,
	 	  Sydney 2052 , Australia \\
	 	  Tel.: +61-2-8306\,0490 \\
	 	  Fax: +61-2-8306\,0405 \\
	 \email{haris.aziz@unsw.edu.au}\\\\
	  J. Lang and J. Monnot  \at
	   LAMSADE, Universit\'e Paris-Dauphine\\
	 Paris, France\\
	 \email{{\{lang, jerome.monnot\}@lamsade.dauphine.fr}}
}

%
%

	\newlength{\wordlength}
	\newcommand{\wordbox}[3][c]{\settowidth{\wordlength}{#3}\makebox[\wordlength][#1]{#2}}
	\newcommand{\mathwordbox}[3][c]{\settowidth{\wordlength}{$#3$}\makebox[\wordlength][#1]{$#2$}}
    		\renewcommand{\algorithmicrequire}{\wordbox[l]{\textbf{Input}:}{\textbf{Output}:}} 
    		 \renewcommand{\algorithmicensure}{\wordbox[l]{\textbf{Output}:}{\textbf{Output}:}}

\date{}

	%

\maketitle

\begin{abstract}
Selecting a set of alternatives based on the preferences of agents is  an important problem in committee selection and beyond. Among the various  criteria put forth for desirability of a committee, Pareto optimality is a minimal and important requirement.
As asking agents to specify their preferences over exponentially many subsets of alternatives is practically infeasible,
we assume that each agent specifies a weak order on single alternatives, from which a preference relation over subsets is derived using some preference extension.
We consider five prominent 
extensions (responsive,  downward lexicographic, upward lexicographic, best, and worst). For each of them, we consider the corresponding Pareto optimality notion, and we
study the complexity of computing and verifying Pareto optimal outcomes. We also consider strategic issues: for four of the set extensions, we present  a linear-time, Pareto optimal and strategyproof algorithm that even works for weak preferences.
\end{abstract}

	\keywords{committee selection \and multiwinner voting \and Pareto optimality \and algorithms and complexity \and set extensions.\\}

\noindent
\textbf{JEL Classification}: C70 $\cdot$ D61 $\cdot$ D71

\section{Introduction}

Pareto optimality is a central concept in economics and has been termed the \emph{``single most important tool of normative economic analysis''}~\citep{Moul03a}.
An outcome is Pareto optimal if there does not exist another outcome that all agents like at least as much and at least one agent strictly prefers.
Although Pareto optimality has been considered extensively in single-winner voting and other social choice settings such as fair division or hedonic games, it has received only little attention in  \emph{multiwinner voting}, in which the outcomes are \emph{sets} of alternatives.
Multiwinner voting applies to selecting a set of plans or a committee, hiring team members, movie recommendations, and more.  For convenience, we use the terminology ``committee'' even if our results have an impact far beyond committee elections~\citep{FSST17a,ABES17a}.

In \emph{single-winner voting} setting, agents express preferences over alternatives and a single alternative is selected. Pareto optimality in this context is straightforward to define, achieve, and verify. In \emph{multiwinner voting}, a well-known difficulty is that it is 
unrealistic to assume that 
agents will report 
preferences over all possible committees, since there is an exponential number of them. For this reason, most approaches assume that they only report a small part of their preferences, and that some {\em extension principle} is used to induce a preference over all possible subsets from this `small input' over single alternatives~\citep{BBP04a}.  Such preference extensions are also widely used in other social choice settings such as fair division or matching. The most two widely used choices of `small inputs' in multiwinner voting are {\em rankings (linear orders) over alternatives} and {\em sets of approved alternatives}.
In this paper we make a choice that generalizes both of them: 
agents report {\em weak orders over single alternatives}. Then we consider five prominent preference extension principles: the \emph{responsive} extension, where a set of alternatives $S$ is at least as preferred as a set of alternatives $T$ if $S$ is obtained from $T$ by repeated replacements of an alternative by another alternative which is at least as preferred;
the {\em optimistic},  or  {\em `best'} (respectively {\em pessimistic}, or {\em `worst'}) extension, which orders subsets of alternatives according to their most (respectively, least) preferred element;
the {\em downward lexicographic} extension, 
a lexicographic refinement of the optimistic extension, and the {\em upward lexicographic} extension, 
a lexicographic refinement of the pessimistic (worst) extension.

The responsive extension~\citep{BBP04a,RoSo90a} can be seen as the ordinal counterpart of  additivity.
The downward lexicographic extension has been considered in various papers~\citep{Boss95a,LMX12a,KlamlerPR12}.
The `best' set extension has been considered in a number of approaches such as full proportional representation \citep{ChamberlinCourant83,Monr95a} and other committee voting settings~\cite{ELS15a}.
The `worst' set extension, also used by \citet{KlamlerPR12} and \citet{SFL15a}, captures settings where the impact of a bad alternative in the selection overwhelms the benefits of good alternatives: for instance, when the decision about a crucial issue will be made by {\em one} of the members of the committee but the agent ignores  which one and is risk-averse; or the case of a parent's preferences over a set of movies to be watched by a child. The `best' and `worst' set extensions have been used  in coalition formation~\citep{AzSa15a,Cech08a}. 


Although 
set extensions have been implicitly or explicitly considered in multiwinner voting,
 most of the computational work has  dealt with
\emph{specific} voting rules (see the related work section).
Instead,
we concentrate on
Pareto optimality, consider the computation and verification of Pareto optimal committees, as well as
the existence of a polynomial-time and strategyproof algorithm that returns Pareto optimal outcomes.

\paragraph{Contributions}

We consider Pareto optimality with respect to  the five aforementioned preference set extensions.
We present various connections between the Pareto optimality notions.
For each of the notions, we undertake a detailed study of complexity of computing and verifying Pareto optimal outcomes.
Table~\ref{table:summary-complexity} summarizes the complexity results.

An important take-home message of the results is that testing Pareto optimality or obtaining Pareto improvements over status-quo committees is computationally hard even though computing \emph{some} Pareto optimal committee is easy. For responsive and downward lexicographic extensions we give a \emph{complete} characterization of the complexity of testing Pareto optimality when preferences are dichotomous and the size of top equivalence class is two: unless P = NP, Pareto optimality can be tested in polynomial time if and only if the size of the first equivalence classes is at most two.
For the `best' extension, we show that even computing a Pareto optimal outcome is NP-hard. Another interesting contrast with the responsive set extension is that even when preferences are dichotomous and the size of top equivalence class is two, testing Pareto optimality is coNP-complete.
In contrast to the other extensions, for the `worst' extension, both problems of computing and verifying Pareto optimal outcomes admit polynomial-time algorithms.

We also consider the requirement of strategyproofness on top of Pareto optimality. 
We show that there exist linear-time  Pareto optimal
and strategyproof algorithms for committee voting even for weak preferences for four of the five set extensions. The algorithms can be considered as careful adaptations of serial dictatorship for 
committee voting.

			\begin{table}[t!]
                        \large
				\centering
				\scalebox{0.8}{
			\begin{tabular}{lll}
			\toprule
			&\textbf{Computation}&\textbf{Verification}\\
			\textbf{Set Extension}&&\\
			\midrule 
				\multirow{2}{*}{Responsive (RS)}&\multirow{2}{*}{in $\text{P}^{\text{IC}}$ (Th.~\ref{th:RS-ic})}&coNP-C~(Th.~\ref{theoParetoCommittee}), W[2]-hard\\
				&&in P---dich. prefs and  $tw\leq 2$ (Th.~\ref{theo2ParetoCommittee})\\
				\midrule
					Downward Lexicographic (DL)&in $\text{P}^{\text{IC}}$ (Th.~\ref{th:dl-compute}) &coNP-C~(Th.~\ref{cor:verify-dl}), W[2]-hard\\
					\midrule
						Upward Lexicographic (UL)&in $\text{P}^{\text{IC}}$ (Th.~\ref{th:ul-compute}) &coNP-C~(Th.~\ref{cor:verify-ul}), W[2]-hard\\
					\midrule
					\multirow{2}{*}{Best ($\mathcal{B}$)}&NP-hard~(Th.~\ref{th:comput-best})&\multirow{2}{*}{coNP-C, W[2]-hard~(Th.~\ref{theo3ParetoCommittee})}\\
					&in P for strict prefs&\\
					\midrule
			Worst ($\mathcal{W}$)&in  $\text{P}^{\text{IC}}$ (Th.~\ref{th:sp-worst}) &in P (Th.~\ref{th:verify-worst})\\
\bottomrule
			\end{tabular}
			}
						\caption{Complexity of computing and verifying Pareto optimal committees. $\text{P}^{\text{IC}}$  (coined by Christos Papadimitriou in a seminar at Simons Institute in 2015) indicates a class of problems in which agents provide the input and the problems admit  a strategyproof and polynomial-time algorithm.}

\label{table:summary-complexity}
			\end{table}

\section{Related Work}\label{related}

A first related stream of work involves studying {\em specific committee elections rules} from a computational  point of view (generally with little or no focus on Pareto optimality).
Our focus on determining whether a committee is Pareto optimal or on finding a Pareto optimal committee, is in some sense orthogonal to the study of committee election rules.
The simplest (and most widely used) rules for electing a committee, called {\em best-$k$} rules, compute a score
for each alternative based on the ranks, and the alternatives with the best $k$ scores are elected~\cite{EFSS14a,FSST16}.
Scoring-based extension principles have also been used by \citet{Darmann13}.  Note that the output of a best-$k$ rule is obviously Pareto-optimal for the preferences induced by this scoring function, but not necessarily with respect to other set extensions.

Klamler {\em et al.} \cite{KlamlerPR12} compute optimal committees under a weight constraint for a {\em single} agent (therefore optimality is equivalent to Pareto optimality), using several preference extensions including `worst', `best', and downward lexicographic.

The `best' ($\mathcal{B}$) extension principle has been used in a number of papers on committee elections by {\em full proportional representation}, starting with \citep{ChamberlinCourant83} and studied from a computational point of view in a long series of papers~
({\em e.g.}, \citep{PSZ08a,LuBo11d,BetzlerSlinkoUllmann13,SkowronFS15,ElkindI15}. These rules obviously output Pareto optimal committees {\em for $\mathcal{B}$}, but not necessarily for other extensions.

Some of the set extensions considered in this paper have corresponding analogues when extending preferences over alternatives to preferences over `lotteries over alternatives.' In particular, the RS set extension corresponds to \emph{SD (stochastic dominance)} lottery extension. Also the DL and UL set extensions considered in this paper correspond to DL and UL lottery extensions considered in works in probabilistic social choice~\citep{Bran13b,ABBH12a,Cho15a}.

Some works are based on the {\em Hamming extension}.
Each agent specifies his ideal committee and he prefers committees with less Hamming distance from the ideal committee.
The Hamming distance notion can be used to define 
specific rules such as {\em minimax approval voting}~\citep{BramsKilgourSanver07}, which selects the committee minimizing the maximum Hamming distance for the agents.
Although the output of minimax approval voting is not always Pareto-optimal for the Hamming extension,  there are good Pareto-optimal approximations of it \cite{CaragiannisKalaitzisMarkakis10}.
Note that for dichotomous preferences, the Hamming extension coincides with the responsive and the downward lexicographic extensions, therefore our computational results for responsive set extension for dichotomous preferences also hold for the Hamming and downward lexicographic extensions.


\smallskip
A second line of work concerns understanding the classes of rules that result in Pareto optimal outcomes.
Most works along this line bear on a different type of committee elections, called {\em designated-seat voting}, where candidates must declare the seat they contest  \citep{Benoit10}.\footnote{If there are exactly two candidates per seat, then designated voting is equivalent to {\em multiple referenda}, where a decision has to be taken on each of a series of yes-no issues.} Results about the existence or non-existence of 
Pareto optimal rules have been presented~\cite{SanverSanver06,Benoit10,CuhadaroluLaine12}.

	\section{Setup}
	We consider 
	a set of agents $N=\{1,\ldots, n\}$, a set of alternatives $A=\{a_1,\ldots, a_m\}$ and a preference profile $\pref=(\pref_1,\ldots,\pref_n)$ such that each $\pref_i$ is a complete and transitive relation over $A$.
		We write~$a \pref_i b$ to denote that agent~$i$ values 
		$a$ at least as much as 
		$b$ and use~$\spref_i$ for the strict part of~$\pref_i$, i.e.,~$a \spref_i b$ iff~$a \pref_i b$ but not~$b \pref_i a$. Finally, $\indiff_i$ denotes~$i$'s indifference relation, i.e., $a \indiff_i b$ iff both~$a \pref_i b$ and~$b \pref_i a$.

		The relation $\pref_i$ results in equivalence classes $E_i^1,E_i^2, \ldots, E_i^{k_i}$ for some $k_i$ such that $a\spref_i a'$ if $a\in E_i^l$ and $a'\in E_i^{l'}$ for some $l<l'$.
		We will use these equivalence classes to represent the preference relation of an agent as a preference list
			$i\midd E_i^1,E_i^2, \ldots, E_i^{k_i}$.
		For example, we will denote the preferences $a\indiff_i b\spref_i c$ by the list $i:\ \{a,b\}, \{c\}$.
        An agent $i$'s preferences are \emph{strict} if
        the size of each equivalence class is 1.
An agent $i$'s preferences are \emph{dichotomous} if he partitions the alternatives into just two equivalence classes, i.e., $k_i=2$.
{Let $Topwidth(\pref)$ be the maximum size of the most preferred equivalence class, i.e., $Topwidth(\pref)=\max_{i\leq n}|E_i^1|$.}
For any $S\subseteq A$, we will denote by $\max_{\pref_i}(S)$ and $\min_{\pref_i}(S)$ the alternatives in $S$ that are maximally and minimally preferred by $i$ respectively. Thus, if  $q$ and $r$
are respectively the smallest and the largest indices such that
$E_i^{q}\cap S\neq \emptyset$ and $E_i^{r}\cap S\neq \emptyset$, then $\max_{\pref_i}(S)=E_i^{q}\cap S$ and  $\min_{\pref_i}(S)=E_i^{r}\cap S$.	
For $k \leq m$, let $S_k(A) = \{ W \subseteq A\midd |W| = k\}$.
	
	\section{Set Extensions and Pareto Optimality}

	\paragraph{Set Extensions}

{\em Set extensions} are used 
for reasoning about the preferences of an agent over \emph{sets} of alternatives given their preferences over single alternatives. \ej
For fixed-size committee voting, the 
\emph{responsive extension}
is very natural and has been applied in various matching settings as well~\citep{BBP04a,RoSo90a}.
For all $V, W\in S_k(A)$, we say that $W \mathrel{\pref_i^{RS}} V$ if and only if  there is an injection $f$ from $V$ to $W$ such that for each {$a\in V$, agent $i$ weakly prefers $f(a)$ to $a$, i.e. $f(a)\pref_i a$.}

We define the `best' set extension and the `worst' set extension which are denoted $\B$ and $\W$ respectively.
For all $W,V\in S_k(A)$, $W \succsim_i^{\B} V$ if and only if
{	$w\pref_i v$ for $w\in \max_{\pref_i}(W)$ and $v\in \max_{\pref_i}(V)$.
    On the other side,
$W \succsim_i^{\W} V$ if and only if $w\pref_i v$ for $w\in \min{\pref_i}(W)$ and $v\in \min_{\pref_i}(V)$.}

	
In the   \emph{downward lexicographic (DL)} extension,
an agent prefers a committee that selects more alternatives from his most preferred equivalence class, in case of equality, the one with more alternatives for the second most preferred equivalence class, and so on.
Formally, $W \succ_i^{DL} V$ iff for the smallest (if any) $l$ with $|W\cap E^l_i| \neq |V\cap E^l_i|$ we have $|W\cap E^l_i| > |V\cap E^l_i|$.

In the   \emph{upward lexicographic (UL)} extension,
an agent prefers a committee that selects less alternatives from his least preferred equivalence class, in case of equality, the one with less alternatives for the second least preferred equivalence class, and so on.
Formally, $W \succ_i^{UL} V$ iff for the largest (if any) $l$ with $|W\cap E^l_i| \neq |V\cap E^l_i|$ we have $|W\cap E^l_i| < |V\cap E^l_i|$.

\begin{remark}
	Consider an agent $i$ with preferences $\pref_i$ over $A$. Let $S, T\subset W$ such that $|S|=|T|=k$.
	Then,
	\begin{itemize}
		\item $S\pref_i^{RS} T\implies S\pref_i^{DL} T \implies S\pref_i^{\B} T$
		\item $S\pref_i^{RS} T\implies S\pref_i^{UL} T \implies S\pref_i^{\W} T$
		\item $S\succ_i^{RS} T\implies S\succ_i^{DL} T $
		\item $S\succ_i^{RS} T\implies S\succ_i^{UL} T $
		
	\end{itemize}
	The relations  follow from the definitions. 
	\end{remark}


%
%

		\paragraph{Efficiency based on Set Extensions}
With each set extension $\mathcal{E}$, we can define Pareto optimality with respect to $\mathcal{E}$.
A committee $W \in S_k(A)$ is \emph{Pareto optimal} with respect to 
$\mathcal{E}$, or  simply {\em $\mathcal{E}$-efficient},
if there exists no committee $W'  \in S_k(A)$ such that $W' \pref_i^{\mathcal{E}} W$ for all $i\in N$ and  $W' \succ_i^{\mathcal{E}} W$ for some $i\in N$.
Note that for each of our set extensions, $\mathcal{E}$-efficiency coincides with standard Pareto optimality when $k=1$. An outcome is a Pareto improvement over another if each agent weakly improves and at least one agent strictly improves.

\begin{example}
	Consider the preference profile:
	\begin{align*}
		1: a, b, c, d\\
		2: d, c, b, a
		\end{align*}

		Suppose $k=2$. Then,
		
		\begin{itemize}
			\item  The unique $\B$-efficient committee is $\{a,d\}$ 
			\item The unique $\W$-efficient committee is $\{b,c\}$.
		\item The DL-efficient committees are $\{a,d\}$, $\{a,b\}$, and $\{d,c\}$.
		\item The  UL-efficient committees are  $\{b,c\}$, $\{a,b\}$, and $\{d,c\}$. 
		\item The RS-efficient committees are $\{a,d\}$, $\{b,c\}$, $\{a,b\}$, and $\{d,c\}$,
		\end{itemize}
		
	\end{example}

    \begin{remark}
	    Consider a committee $S$.
	    \begin{itemize}
	    \item If $S$ is $DL$-efficient, then $S$ is $RS$-efficient
	    \item If $S$ is $UL$-efficient, then $S$ is $RS$-efficient

	    \end{itemize}

The argument is as follows. 
Suppose $S$ is not $RS$-efficient, then there exists some other outcome $T$ such that $T\pref_i^{RS} S$ for all $i\in N$ and $T\succ_i^{RS} S$ for some $i\in N$. In that case $T\pref_i^{DL} S$ for all $i\in N$ and $T\succ_i^{DL} S$ for some $i\in N$. Also  $T\pref_i^{UL} S$ for all $i\in N$ and $T\succ_i^{UL} S$ for some $i\in N$. Hence $S$ is neither $DL$-efficient not $UL$-efficient.
    \end{remark}
    
       \begin{remark}
     There always exists a $\mathcal{B}$-efficient committee that is also DL-efficient: DL Pareto improvements over a $\B$-efficient does not harm any agent with respect to the $\B$ relation. 
    \end{remark}
 
        \begin{remark}
           There always exists a $\mathcal{W}$-efficient committee that is also UL-efficient: UL Pareto improvements over a $\W$-efficient does not harm any agent with respect to the $\W$ relation. 
        \end{remark}

            In Figure~\ref{fig:relations}, we illustrate the relations between the different efficiency notions.  Later on in the paper we will present an algorithm that returns a committee that is $UL$-efficient and $DL$-efficient, and hence $RS$-efficient.

	\begin{figure}[htbp]
					 \centering
					   			\begin{tikzpicture}[scale=0.6]
					   				\centering
				\node (b) at (0,4) {$\mathcal B$-efficiency};
				\node (w) at (10,4) {$\mathcal W$-efficiency};
				\node (rs) at (5,-4) {$RS$-efficiency};
				\node (dl) at (0,0) {$DL$-efficiency};
				\node (ul) at (10,0) {$UL$-efficiency};
				\draw[densely dashed] (dl) to (b);
				\draw[densely dashed] (ul) to (w);
				\draw[densely dashed] (b) to (rs);
				\draw[densely dashed] (w) to (rs);
				\draw[densely dashed] (dl) to (ul);
				\draw[->] (dl) to (rs);
				\draw[->] (ul) to (rs);

					   			\end{tikzpicture}		
\caption{Relations between the five notions of efficiency. An arrow from ${\mathcal E}_1$-efficiency to ${\mathcal E}_2$-efficiency means that ${\mathcal E}_1$-efficiency implies ${\mathcal E}_1$-efficiency; a dashed line means there always exists a committee that is both ${\mathcal E}_1$- and ${\mathcal E}_2$-efficient;
 absence of arrow or line means that the sets of ${\mathcal E}_1$- and ${\mathcal E}_2$-efficient committees can be disjoint.}
\label{fig:relations}
\end{figure}
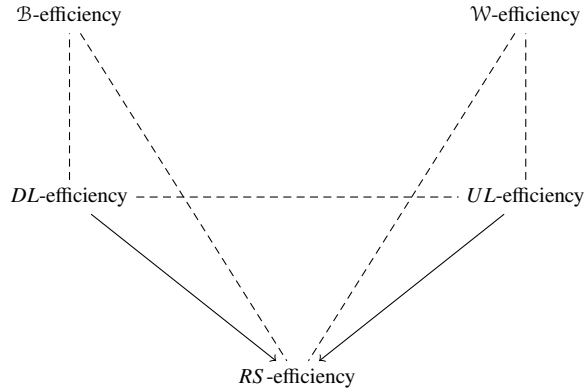

We also make the following general observation.


							\begin{lemma}
								If there is a polynomial-time algorithm to compute a Pareto improvement over a committee, then there exists a polynomial-time algorithm to {compute an $\mathcal{E}$-efficient committee} under set extensions $\mathcal{E}\in \{RS,DL,UL,  \mathcal{W}, \mathcal{B} \}$.
								\end{lemma}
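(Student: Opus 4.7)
The plan is to repeatedly invoke the assumed polynomial-time Pareto-improvement subroutine. Start from an arbitrary committee $W_0\in S_k(A)$; at iteration $t$ call the subroutine on $W_t$ and, if it returns a Pareto improvement $W_{t+1}$, set $W_{t}\gets W_{t+1}$ and continue, otherwise output $W_t$, which is then $\mathcal{E}$-efficient by definition of the subroutine. Correctness is immediate, so the substantive step is to bound the number of iterations by a polynomial in the input size.

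Each iteration produces a strict Pareto improvement, so no committee is revisited; to rule out exponentially long chains I would exhibit, for each set extension $\mathcal{E}$, an integer potential $\Phi_{\mathcal{E}}\colon S_k(A)\to\mathbb{Z}_{\geq 0}$ of polynomial magnitude that strictly increases with every $\mathcal{E}$-Pareto improvement. For $RS$ a clean choice is $\Phi_{RS}(W)=\sum_{i\in N}\sum_{a\in W}r_i(a)$ with $r_i(a)=|\{b\in A\colon a\succ_i b\}|$: the injection characterisation of $\pref_i^{RS}$ gives $\Phi_{RS}(W')\geq \Phi_{RS}(W)$ under any weak $RS$-improvement, and equality of the per-agent rank sums would force $r_i(f(a))=r_i(a)$ along the injection $f$, hence a reverse injection witnessing $RS$-indifference, so strict preference yields strict increase. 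Summing over agents, $\Phi_{RS}$ grows by at least $1$ per Pareto improvement and lies in $[0,nkm]$. For $\B$ (resp.\ $\W$) use $\Phi_{\B}(W)=\sum_{i\in N}r_i(\max_{\pref_i}W)$ (resp.\ $\Phi_{\W}(W)=\sum_{i\in N}r_i(\min_{\pref_i}W)$), each bounded by $nm$ and increasing strictly under any $\B$- (resp.\ $\W$-) Pareto improvement because the $\max$ (resp.\ $\min$) rank of some agent must strictly rise while none decreases.

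For $DL$ and $UL$ the natural analogue $\phi_i(W)=\sum_{l}(k+1)^{k_i-l}|W\cap E_i^l|$ (reversed for $UL$) captures the lexicographic comparison exactly and increases by at least $1$ under any strict individual improvement, but its range is exponential, so a naive chain-length bound from this potential is insufficient. This is where I expect the main obstacle to lie. My plan is to exploit the observation that a $DL$-Pareto improvement forces every agent's histogram $(|W\cap E_i^l|)_l$ to agree with the previous one on every level strictly above the earliest level of disagreement, so that the improvement can be charged to a strict integer change at a single level for at least one agent; packaging this into a per-level potential of the form $\Phi_{DL}(W)=\sum_{l}w_l\sum_{i\in N}|W\cap E_i^l|$ with a polynomially bounded and carefully chosen weight sequence $w_l$ should suffice, and the argument for $UL$ is symmetric with weights concentrated on the bottom classes. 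The hard part will be verifying that the resulting potential is simultaneously polynomially bounded and strictly increasing on every $DL$- or $UL$-Pareto improvement.
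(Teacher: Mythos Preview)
Your iterated-improvement framework is exactly the one the paper uses, and your treatment of $RS$, $\mathcal{B}$, and $\mathcal{W}$ is correct. For $\mathcal{B}$ and $\mathcal{W}$ your potentials give the same $nm$ bound the paper states. For $RS$ your direct Borda-type potential $\Phi_{RS}$ with range $O(nkm)$ is in fact cleaner than the paper's route, which bounds $RS$-chains by bounding $DL$-chains (using that every $RS$-improvement is a $DL$-improvement).

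For $DL$ and $UL$, however, your suspicion that a polynomially bounded strictly increasing potential will be ``the hard part'' is an understatement: no such potential can exist, because the iterated-improvement scheme need not terminate in polynomially many steps. Take $n=1$, strict preferences $a_1\succ\cdots\succ a_m$, and $k=\lfloor m/2\rfloor$. Then $\succ^{DL}$ is the lexicographic total order on the $\binom{m}{k}$ characteristic vectors of $k$-subsets, so a Pareto-improvement oracle may always return the immediate $DL$-successor of its input; starting from the worst committee yields a chain of length $\binom{m}{k}-1$. Any potential that strictly increases along every $DL$-Pareto improvement must therefore have superpolynomial range, so neither your weighted-histogram idea nor any refinement of it can work. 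The same example defeats $UL$.

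The paper's own proof claims an $m^{2}n$ bound on $DL$-chain length via an informal per-level counting argument; the single-agent example above shows that bound cannot hold, so the paper's argument shares this gap. Note that this does not refute the lemma itself for $DL$ and $UL$: the paper elsewhere exhibits direct polynomial-time algorithms for $DL$- and $UL$-efficient committees, so the conclusion of the implication is true independently---it is only the iterative-improvement \emph{proof strategy} that fails for these two extensions.
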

								\begin{proof}
{Here, we start from any committee and we recursively apply Pareto improvement until we reach a Pareto optimal committee.}
								For the `best' and `worst' extensions, there can be at most $mn$ Pareto improvements because for one agent there can be at most $m$ improvements.
Since an $RS$-improvement implies an $DL$-improvement, let us bound the number of Pareto-improvements with respect to $DL$. 
In each Pareto-improvement, for the agent who strictly improves, 
the most preferred equivalence class that has different number of alternative in the outcome increases by at least one. Therefore the most preferred equivalence class can be the improving class in at most $m$ of the Pareto improvements. Similarly, the number of Pareto improvements in the subsequent less preferred equivalence class improves in a Pareto improvement can be at most $m$ of the Pareto improvements. Therefore the total number of DL Pareto-improvements is bounded by $m^2n$. A similar argument holds for UL as well. 
									\end{proof}



We end this section by observing 
that, under any of the set extensions we consider,  
a set of Pareto optimal alternatives may be Pareto dominated. Consider the following example.

\begin{example}\label{exa-pd}
					\begin{align*}
				1&: a,c,b,d
				&\quad2&: a,d,b,c\\
				3&: b,c,a,d
				&\quad4&: b,d,a,c
				\end{align*}
				The set $\{c,d\}$	consists of Pareto optimal alternatives but is Pareto dominated  by $\{a,b\}$ under any of our set extensions.	
\end{example}
				
				
	\section{Responsive Set Extension}

There  is a trivial way to achieve Pareto optimality under the responsive set extension by taking any {decreasing} scoring vector consistent with the ordinal preferences, finding the total score of each alternative and returning the set of $k$ alternatives with the maximum scores. For instance, on Example \ref{exa-pd}, the outcome of the rule that outputs the alternatives with the best $k$ Borda scores is $\{a,b\}$.

\begin{theorem}\label{th:rs-compute}
	A Pareto optimal committee under the responsive set extension committee can be computed in linear time.
	\end{theorem}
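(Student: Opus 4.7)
The plan is to reduce RS-efficiency to a Borda-style cardinal maximization. For each agent $i \in N$ and alternative $a \in A$, I would set $s_i(a) = |\{b \in A : a \succ_i b\}|$, which represents $\pref_i$ in the sense that $a \succ_i b$ iff $s_i(a) > s_i(b)$ and $a \sim_i b$ iff $s_i(a) = s_i(b)$. Then, letting $S(a) = \sum_{i \in N} s_i(a)$, the algorithm returns any $W \in S_k(A)$ consisting of $k$ alternatives with the largest $S$-values (ties broken arbitrarily). Both constructing $S$ (one sweep per agent over the preference list) and selecting the top $k$ entries of $S$ (by linear-time selection) run in $O(nm)$ time, which is linear in the input.

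For correctness I would argue by contradiction. Suppose some $W' \in S_k(A)$ is an RS-Pareto improvement over $W$. For each $i \in N$, $W' \succsim_i^{RS} W$ gives an injection $f_i : W \to W'$ with $f_i(a) \succsim_i a$ for all $a \in W$; since $|W| = |W'| = k$, this $f_i$ is in fact a bijection, so
\[ \sum_{b \in W'} s_i(b) \;=\; \sum_{a \in W} s_i(f_i(a)) \;\geq\; \sum_{a \in W} s_i(a). \]
For the agent $i^*$ at whom the improvement is strict, I would then show that one can pick a witnessing bijection $f_{i^*}$ with $f_{i^*}(a) \succ_{i^*} a$ for at least one $a \in W$, so that $\sum_{b \in W'} s_{i^*}(b) > \sum_{a \in W} s_{i^*}(a)$. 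Summing over all agents yields $\sum_{b \in W'} S(b) > \sum_{a \in W} S(a)$, contradicting the maximality of $\sum_{a \in W} S(a)$.

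The main subtle step is the strict-preference argument for $i^*$ in the weak-order case: if \emph{every} witnessing bijection $f_{i^*}$ satisfied $f_{i^*}(a) \sim_{i^*} a$ for all $a \in W$, then $f_{i^*}^{-1} : W' \to W$ would also be a weakly-preferring bijection, witnessing $W \succsim_{i^*}^{RS} W'$ and contradicting strictness of $W' \succ_{i^*}^{RS} W$. This inverse-bijection observation is the only place where weak preferences require any care; the rest of the argument is a direct accounting of total scores, and linear-time selection takes care of the algorithmic side.
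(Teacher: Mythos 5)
Your proposal is correct and follows essentially the same approach as the paper, which justifies this theorem with a one-line remark: take any decreasing scoring vector consistent with the ordinal preferences (e.g., Borda-style scores), sum over agents, and return the $k$ alternatives with the highest total scores. Your write-up merely fills in the correctness details (the bijection/score accounting and the inverse-bijection argument for the strictly improving agent) that the paper leaves implicit.
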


		In many situations, one may already have a status-quo committee and one may want to find a Pareto improvement over it. This problem of testing Pareto optimality  and finding a Pareto improvement under the responsive set extension turns out to be a much harder task.
Note that if there exists a polynomial-time algorithm to compute a
Pareto improvement, then it means that testing
Pareto optimality  is also polynomial-time solvable.

						\begin{theorem}\label{theoParetoCommittee}
Checking whether a committee is Pareto optimal under the responsive set extension is coNP-complete, even for dichotomous preferences and  $Topwidth(\pref) \geq 3$,  or for strict preferences.
						\end{theorem}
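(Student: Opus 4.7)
My plan is to split the proof into membership in coNP and two separate coNP-hardness reductions, one tailored to the dichotomous case with $Topwidth \geq 3$ and one tailored to strict preferences.

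For membership, a succinct certificate of non-Pareto-optimality is a candidate Pareto-dominating committee $W'$. Given $W'$, to verify $W' \pref_i^{RS} W$ I would check that the bipartite graph on $W \cup W'$ with an edge $\{w,w'\}$ whenever $w' \pref_i w$ admits a perfect matching saturating $W$, using a standard bipartite matching algorithm; the designated strict witness $i^\ast$ is certified analogously by additionally ruling out such a matching under the restricted edge set $w' \spref_{i^\ast} w$. All checks run in polynomial time, placing the problem in coNP.

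For the dichotomous case with $Topwidth \geq 3$, I would reduce from a bounded-occurrence variant of \textsc{Exact Cover by 3-Sets} (still NP-hard when each element belongs to at most three sets). Under dichotomous preferences, $W' \pref_i^{RS} W$ is equivalent to the count condition $|W' \cap A_i| \geq |W \cap A_i|$, so a Pareto improvement is simply a new committee of the same size whose per-agent approval count vector weakly dominates and strictly exceeds that of $W$ in at least one coordinate. Given $(X, \mathcal C)$ with $|X| = 3q$, the alternatives would comprise a set-alternative $c_C$ for each $C \in \mathcal C$ together with a modest family of auxiliary alternatives; each element-agent $\alpha_x$ approves only the (at most three) set-alternatives corresponding to sets containing $x$, keeping $Topwidth = 3$; and a carefully chosen collection of auxiliary agents binds the auxiliary alternatives into the status quo, so that any Pareto-dominating committee is forced to replace a designated, deliberately ``bad'' $q$-subset of $\mathcal C$ sitting inside $W$ by an exact cover of $X$.

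For strict preferences, $Topwidth = 1$ so the dichotomous construction is not available, and the characterization of $W' \pref_i^{RS} W$ becomes pairwise dominance after sorting both committees by $\pref_i$. I would reduce from a matching/partition NP-hard problem such as \textsc{3-Dimensional Matching}, engineering each agent's linear order so that the rank-by-rank domination required by $\pref_i^{RS}$ can hold simultaneously for every agent only via a structured exchange of alternatives that encodes a perfect matching in the input instance.

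The main obstacle in the dichotomous reduction is a sum-invariance: since $\sum_i |W \cap A_i| = \sum_{a \in W}|\{i : a \in A_i\}|$, in a ``pure'' construction where every alternative is approved by a uniform number of agents (as happens if one naively uses only the element-agents $\alpha_x$), this total is determined by $|W|$ alone, ruling out any Pareto improvement regardless of whether an exact cover exists. Breaking this invariance—by introducing auxiliary alternatives of differing approval weights together with a calibrated status quo that forces any dominating committee to strictly increase the total approval budget along an exact-cover configuration—is the crux of the construction. An analogous coordination problem, namely preventing an improvement for one agent from destroying rank-by-rank dominance for another, is the chief technical difficulty in the strict-preferences reduction.
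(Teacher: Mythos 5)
Your coNP membership argument (guess $W'$, verify $RS$-dominance per agent via bipartite matching) is fine, and your observation that for dichotomous preferences $W'\pref_i^{RS}W$ collapses to the count condition $|W'\cap E_i^1|\ge|W\cap E_i^1|$ is exactly the right structural lemma. But the hardness reductions, which are the entire content of the theorem, are left as placeholders (``a modest family of auxiliary alternatives'', ``a carefully chosen collection of auxiliary agents'', ``engineering each agent's linear order''), and the one reduction you do sketch has a concrete flaw. A Pareto improvement only requires each agent to \emph{weakly} improve. In your X3C construction, an element-agent $\alpha_x$ whose element is not covered by the ``bad'' $q$-subset sitting in the status quo has approval count $0$, so \emph{any} committee weakly improves it; nothing forces a dominating committee to cover $x$. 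Hence the existence of a Pareto improvement would not certify an exact cover, and the reduction breaks. To repair it you must give every element-agent approval count exactly $1$ in the status quo via a dummy alternative, so that evicting the dummy forces coverage --- but adding a dummy to a top class that already contains up to three set-alternatives pushes $Topwidth$ to $4$, so your topwidth accounting also fails. The ``sum-invariance'' you flag as the crux is not the real obstacle; the weak-dominance issue is.

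The paper's proof avoids all of this with a cleaner reduction from \textsc{Vertex Cover}: the status quo is a set $D=\{d_1,\dots,d_k\}$ of dummies, each edge $e=[u,v]$ spawns $k$ agents $e^i$ with top class $\{u,v,d_i\}$ (size exactly $3$, each currently satisfied by exactly $d_i$), and one special agent with top class $\{x,y\}$ for a fixed edge $[x,y]$ who is the unique agent able to strictly improve. Since $|W'|=k$, any dominating $W'\ne D$ drops some $d_i$, forcing every agent $e^i$ to retain $u$ or $v$, i.e.\ $W'\cap V$ is a vertex cover of size at most $k$; the converse is immediate. A covering constraint, not a partition constraint, is what weak dominance naturally encodes --- which is why \textsc{Vertex Cover} is the right source problem here and X3C is an uphill battle. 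The strict case then comes for free by refining $\{u,v,d_i\}$ into $\{u\},\{v\},\{d_i\}$, since the argument only depends on how many of an agent's top three alternatives survive; your separate 3DM reduction is unnecessary and, as written, nonexistent.
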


						\begin{proof}
We only present the case where $Topwidth(\pref)=3$. The reduction is from the NP-complete problem {\sc vertex cover} \cite{GaJo79a}.
Given a simple graph $G=(V,E)$, the {\sc minimum vertex cover} problem consists in finding a subset $C\subseteq V$ of minimum size such that every edge $e\in E$ is incident to some node of $C$. Its
decision version
{\sc vertex cover} takes as input
a simple graph $G=(V,E)$ and an integer $k$ and problem is deciding if there exists a vertex cover $C\subseteq V$ of $G$ with $|C|\leq k$.

Let $\langle (V,E), k \rangle$ be an instance of {\sc vertex cover}, with $[x,y]$ being one arbitrary edge in $E$.
We build  the following  instance of Pareto optimality under $RS$:
\begin{itemize}\setlength\itemsep{0em}
\item[$\bullet$] $N=\cup_{e\in E}N_e \cup \{a\}$, where for each edge $e\in E$, $N_e$ is a set of $k$ agents, and $a$ is a special agent.

						\item[$\bullet$] 
						$A=V\cup D$, where $D=\{d_1,\dots,d_k\}$.

						\item[$\bullet$]
						For each $e=[u,v]\in E$, the preferences of
						agent $e^i$, for $i=1,\dots, k$, and of agent $a$, are
						$$\begin{array}{ll} e^i:&~\{u,v,d_i\}, (D-d_i)\cup (V\setminus\{u,v\})\\
						a:& \{x,y\}, D\cup (V\setminus\{x,y\})
						\end{array} $$
						 \end{itemize}
						
					The reduction is clearly done within polynomial time and preferences are dichotomous.						
We can check easily that committee $D$ (of size $k$) is not Pareto optimal under 
$RS$ if and only if there exists a vertex cover of $G$ of size at most $k$.

		For strict preferences, in the previous reduction we replace $\{u,v,d_i\}, \ldots$ by  $\{u\},\{v\}, \{d_i\}, \ldots$ in the preferences of $e^i$.
		It is easy to see that the proof is similar.
						\end{proof}

Using a similar reduction from the {\sc Hitting Set} problem, we can also prove Theorem~\ref{theoParetoCommitteeW2} that concerns a parametrized complexity intractability result~\cite{DoFe13a}. {\sc Hitting Set} is defined as follows: given a ground set $X$ of elements, and a collection $\mathcal{C}=\{C_1,\ldots, C_{\ell}\}$ of subsets of $X$, does there exist a $H\subset X$ such that $|H|\leq k$ and  $H\cap C\neq \emptyset$ for all $C\in \mathcal{C}$?

				\begin{theorem} \label{theoParetoCommitteeW2}
				Checking whether a committee is  Pareto optimal under the responsive set extension is W[2]-complete under parameter $k$, even for dichotomous preferences.
				\end{theorem}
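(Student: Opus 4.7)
}

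The plan is to mimic the construction of Theorem~\ref{theoParetoCommittee} but to replace the reduction from \textsc{Vertex Cover} (which is FPT in $k$ and therefore useless for W[2]-hardness) with an FPT-reduction from \textsc{Hitting Set} parameterized by the solution size $k$, a canonical W[2]-complete problem. W[2] membership can be argued by noting that, with dichotomous preferences, asking whether $D$ admits a Pareto improvement of size $k$ reduces naturally to a weighted monotone 2-SAT / multicoloured hitting set formulation: the certificate is a $k$-element set, and each weak/strict $RS$-improvement condition is a threshold on its intersection with an agent's top class, which fits the standard weft-$2$ normal form.

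For the hardness reduction, given a \textsc{Hitting Set} instance $(X,\mathcal{C}=\{C_1,\ldots,C_{\ell}\},k)$ (WLOG $|X|\ge k$, and fix an arbitrary $C_1\in\mathcal{C}$), I would build the instance with alternatives $A=X\cup D$, where $D=\{d_1,\ldots,d_k\}$, and agents
\[
N\;=\;\{c_j^i : 1\le j\le\ell,\ 1\le i\le k\}\cup\{a\},
\]
where agent $c_j^i$ has dichotomous preferences with top class $C_j\cup\{d_i\}$, and the special agent $a$ has top class $C_1$. The status-quo committee under test is $D$ itself. The parameter remains $k$, and the instance has size polynomial in $|X|+k\ell$, so this is an FPT reduction.

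The equivalence I would then prove is that $D$ is \emph{not} $RS$-Pareto optimal iff $(X,\mathcal{C})$ has a hitting set of size at most $k$. For ($\Leftarrow$), given a hitting set $H\subseteq X$ with $|H|\le k$, extend it arbitrarily inside $X$ to a $k$-element $S\subseteq X$; under dichotomous $RS$, each $c_j^i$ weakly improves because $|S\cap (C_j\cup\{d_i\})|\ge |S\cap C_j|\ge 1=|D\cap(C_j\cup\{d_i\})|$, and agent $a$ strictly improves since $S\cap C_1\neq\emptyset$ while $D\cap C_1=\emptyset$. For ($\Rightarrow$), let $S$ be an $RS$-Pareto improvement; since $S\ne D$ and $|S|=k=|D|$, there exists $i^*$ with $d_{i^*}\notin S$, and then the weak-improvement constraint of each $c_j^{i^*}$ forces $S\cap C_j\ne\emptyset$, making $S\cap X$ a hitting set of size $\le k$.

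The main obstacle I anticipate is not the reduction itself (which is a minor adjustment of the one in Theorem~\ref{theoParetoCommittee}) but the W[2] membership argument, which requires packaging the ``$k$-sized committee + every-agent weak RS + some-agent strict RS'' condition into a weft-2 formula or a direct FPT-reduction to \textsc{Dominating Set} / \textsc{Multicoloured Hitting Set}; the strict-improvement disjunction over agents is the piece that must be handled carefully so as not to blow the weft up to~3.
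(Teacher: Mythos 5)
Your reduction is exactly the one the paper intends: the paper gives no explicit proof of Theorem~\ref{theoParetoCommitteeW2}, saying only that it follows by ``a similar reduction from \textsc{Hitting Set},'' and your construction (agents $c_j^i$ with top class $C_j\cup\{d_i\}$ for each set $C_j$ and each $i\le k$, a special agent for one fixed set, and status-quo committee $D$) is precisely the \textsc{Hitting Set} analogue of the \textsc{Vertex Cover} gadget in Theorem~\ref{theoParetoCommittee}, and your two-direction equivalence argument is correct. The W[2]-membership part remains only a sketch in your write-up, but the paper supplies no membership argument either (its summary table in fact claims only W[2]-hardness), so you have matched and, on the hardness side, completed the paper's approach.
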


For dichotomous preferences we present a complete characterization of the complexity according to the $Topwidth(\pref)$ parameter.
If $Topwidth(\pref)= 1$, then in any Pareto improvement over committee $D$, any alternative in $D$ that is most preferred by some agent needs to be kept selected, and therefore the problem of checking $RS$-efficiency is easy. If $Topwidth(\pref)\geq 3$, from Theorem \ref{theoParetoCommittee}, the problem is hard. Remains the case $Topwidth(\pref)= 2$.

\begin{theorem}
							\label{theo2ParetoCommittee}
For dichotomous preferences, a Pareto improvement over a committee with respect to the responsive set extension can be computed in polynomial time
when $Topwidth(\pref)\leq 2$.
\end{theorem}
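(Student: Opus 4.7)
The plan is to reduce the search for a Pareto improvement to a polynomial number of bipartite matching subproblems. Since preferences are dichotomous with $Topwidth(\pref)\leq 2$, each agent $i$'s top class $E_i^1$ has size $1$ or $2$, and their satisfaction with a committee $X$ is simply $s_i(X)=|E_i^1\cap X|\in\{0,1,2\}$. Writing any candidate improvement as $W'=(W\setminus S)\cup T$ with $|S|=|T|\geq 1$, responsive dominance becomes $|T\cap E_i^1|\geq|S\cap E_i^1|$ for every $i$, with strict inequality for some.

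I first preprocess: let $F\subseteq W$ be the \emph{forced} alternatives, those lying in an $E_i^1$ already fully contained in $W$ (singletons $\{a\}$ with $a\in W$, or pairs $\{a,a'\}$ with both $a,a'\in W$). These cannot leave $W$ without loss, so $S\subseteq W^-:=W\setminus F$, and if $W^-=\emptyset$ no improvement exists. I then build the bipartite \emph{constraint graph} $H$ on vertex classes $W^-$ and $A\setminus W$, inserting an edge $ab$ for each split agent with $E_i^1=\{a,b\}$, $a\in W^-$, $b\notin W$. Every such agent has $s_i=1$, so the no-loss condition for the swap reduces exactly to $T\supseteq N_H(S)$; singleton agents and both-outside agents impose no removal constraint.

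Next I enumerate the $O(n)$ \emph{gain witnesses}, one for each way an agent can strictly improve: (i)~an agent with $E_i^1=\{b\}$, $b\notin W$, whose witness requires $b\in T$; (ii)~an agent with $E_i^1=\{b,b'\}\subseteq A\setminus W$, whose witness requires some element of $E_i^1$ in $T$ (try each); or (iii)~a split agent $E_i^1=\{a,b\}$, $a\in W$, $b\notin W$, whose witness requires $b\in T$ and $a\notin S$ (the latter automatic if $a\in F$, else sets a forbidden alternative $a^*=a$). For each witness I search for a subset $S\subseteq W^-\setminus\{a^*\}$ of size at least $1$ satisfying $|N_H(S)\cup\{b\}|\leq|S|$; any such $S$ yields a valid Pareto improvement by setting $T=N_H(S)\cup T'$, where $T'\subseteq(A\setminus W)\setminus N_H(S)$ has size $|S|-|N_H(S)|$ and contains $b$ whenever $b\notin N_H(S)$.

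The feasibility test for each witness is itself a bipartite matching problem. Augment $H$ into $H'$ by adding one new left vertex $v_b$ adjacent only to $b$, and delete the forbidden $a^*$ (if any) from the left side. The sought $S$ exists iff $H'$ has a Hall violator containing $v_b$; by supermodularity of $|S|-|N(S)|$, this holds iff the (unique) maximum Hall violator of $H'$ contains $v_b$, which is read off in polynomial time from any maximum matching via alternating-path reachability from unmatched left vertices (K\"onig--Egerv\'ary). Iterating over the $O(n)$ witnesses yields a polynomial-time algorithm overall. The main obstacle will be the tight subcase: when $H$ already admits a matching saturating $W^-$, no Hall violator exists in $H$ itself, so the strict improvement must be enforced externally---this is precisely the role of the auxiliary vertex $v_b$, which creates a Hall violator in $H'$ exactly when the witnessing alternative $b$ can indeed be added to $T$.
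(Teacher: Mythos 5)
Your overall architecture is sound and is essentially the K\"onig-dual of the paper's argument: the paper keeps the same forced set (its $W'$, your $F$), builds the same bipartite graph on the split agents' top pairs, and asks for a minimum vertex cover of size at most $k-k'$ that can be forced to contain a ``gain'' vertex via pendant edges; you ask the dual question about matchings and Hall deficiencies in the same graph, with the pendant vertex $v_b$ playing the role of the paper's pendant edges. The reduction of responsive dominance to $|T\cap E_i^1|\geq|S\cap E_i^1|$, the identification of the forced alternatives, the constraint $T\supseteq N_H(S)$, and the enumeration of the $O(n)$ gain witnesses are all correct and complete.

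Two of your concrete steps fail as written, however. First, the per-witness test: the set of left vertices reachable by alternating paths from unmatched left vertices is \emph{not} the maximal maximum-deficiency set, and it can miss $v_b$ even when a Hall violator containing $v_b$ exists. Take $W=\{a_1,a_2,a_3\}$, $k=3$, a single agent with top class $\{b\}$, $b\notin W$: there are no split agents, so $H$ has no edges, $H'$ has the single edge $v_b b$, its unique maximum matching leaves $a_1,a_2,a_3$ unmatched and isolated, the reachable set is $\{a_1,a_2,a_3\}\not\ni v_b$, and your test answers ``no''---yet swapping $a_1$ for $b$ is a Pareto improvement (and $\{a_1,v_b\}$ is a Hall violator containing $v_b$). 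The repair is that your condition degenerates: since adjoining $v_b$ to any $S$ never decreases $|S|-|N(S)|$, a Hall violator containing $v_b$ exists iff \emph{any} Hall violator exists, i.e.\ iff the maximum matching of $H'$ fails to saturate its left side; that is the test to run. Second, ``any such $S$ yields a valid Pareto improvement'' is false: building $T$ of size $|S|$ inside $A\setminus W$ requires $|S|\leq|A\setminus W|$, which an arbitrary deficient $S$ (e.g.\ one read off a maximum Hall violator, as in the example above where $|S|=3>1=|A\setminus W|$) can violate. You must take an inclusion-minimal $S$ with $|N_H(S)\cup\{b\}|\leq|S|$; minimality forces $|N_H(S)\cup\{b\}|=|S|\leq|A\setminus W|$, and then $T=N_H(S)\cup\{b\}$ works directly. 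Both repairs are routine, but without them the algorithm returns wrong answers.
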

\begin{proof}
Consider
a preference profile $\pref=(\pref_1,\ldots,\pref_n)$ where each $\pref_i$ is dichotomous and verifies $Topwidth(\pref)= 2$,
and 
let $D \in S_k(A)$.
For each  $i\in N$, let $(E_i^1,E_i^2)$ be the partition associated with $\pref_i$.

First, if for all $i \in N$, $E_i^1 \subseteq D$, then $D$ is obviously $RS$-efficient. Assume it is not the case, that is,
(1) for some $i \in N$, $E_i^1 \setminus D \neq \emptyset$.
Let
\begin{itemize}
\item $N'=\{i\in N:E_i^1\cap D=E_i^1\}$, $W'=\cup_{i\in N'}E_i^1$ (by construction, $W'\subseteq D$), and $k'=|W'|$.
\item $N''=\{i\in N\setminus N':E_i^1\cap (D\setminus W')\neq \emptyset\}$ and \mbox{$A''=\cup_{i\in N''}E_i^1$}.
\end{itemize}

Now, we build a graph $G=(V,E)$ with $V=\{v_1,\dots,v_r\}$ isomorphic to $A''$, and $[v_p,v_q]\in E$ iff $E_i^1=\{a_p,a_q\}$ for some $i\in N''$: each edge of $G$ corresponds to the top two alternatives of some agent, provided one of them is in $D\setminus W'$.
Let $\tau(G)$ be the size of an optimal vertex cover of $G$.

We first claim that there is a  Pareto improvement over $D$ if and only if one of follows two conditions is satisfied:
\begin{itemize}\setlength\itemsep{0em}
\item[$(i)$] $\tau(G)<k-k'$, or
\item[$(ii)$] $\tau(G)=k-k'$, and there is an optimal vertex cover
of $G$ containing either  at least an element of $E_i^1$ for some  $i\notin N'\cup N''$, or two elements of $E_i^1$ for some $i \in N''$.
\end{itemize}

We first show that (i) and (ii) are sufficient. If (i) holds then take a committee corresponding to a minimum vertex cover of $G$, add to it the $k'$ alternatives of $W'$, and
add $(k - k')- \tau(G)$ alternatives, with at least one in $\cup_i(E_i^1 \setminus D)$; this is possible because of (1). If (ii) holds, then take a committee corresponding to a minimum vertex cover of $G$, and add to it the $k'$ alternatives of $W'$. In both cases, the obtained committee contains $E_i^1$ for all $i \in N'$, contains at least one element of $E_i^1$ for all $i \in N'$, and contains either two elements of $E_i^1$ for some $i \in N''$, or an element of $E_i^1$ for some $i \notin N \cup N''$. Therefore it is a Pareto-improvement over $D$.

Now, we show that (i) and (ii) are necessary. Let $W\in S_k(A)$ be a  Pareto improvement of $D$ containing a maximum number of alternatives from $D$. We have the following two properties: $W'\subseteq W$ and $W\setminus W'$ is a vertex cover of $G$. $W'\subseteq W$ holds, since otherwise there would be an $i\in N'$ such that $W' \succsim_i^{RS} W$ does not hold.
For similar reasons, $C'=(W\setminus W')\cap A''$ is a vertex cover of $G$. If
$|(W\setminus W')\cap A''|< \tau(G)$, then by adding to it any set of $D\setminus C'$ of size $k-k'-\tau(G)$ we obtain a set of size $k$ which constitutes a Pareto improvement of $D$ because now, $E_i^1\subseteq W$ for some $i\in N''$. If $|(W\setminus W')\cap A''|= \tau(G)$, then
$(W\setminus W')\cap A''=W\setminus W'$ and necessarily either $E_i^1\cap C\neq\emptyset$ for some $i\notin (N'\cup N'')$ or $E_i^1\subseteq C$ for some $i\in N''$.

It remains to be shown that (i) and (ii) can be checked in polynomial time.
(i) can be done in polynomial-time because $G$ is bipartite: indeed, by construction, $G$ is two-colorable with color sets $A''\cap D$ and $A''\setminus D$, and
by K{\"{o}}nig's theorem, for bipartite graphs, the problem of finding the minimum vertex cover is equivalent to computing a maximum matching, hence solvable in polynomial time. As for (ii), if $\tau(G)=k-k'$, we have to check whether for some optimal vertex cover $C$ of $G$, either (ii.1) $E_i^1\cap C\neq\emptyset$ holds for  some $i\notin (N'\cup N'')$, or (ii.2)  $E_i^1\subseteq C$ for some $i\in N''$.
In order to check (ii.1), for each $i\notin (N'\cup N'')$ such that there exists $x \in E_i^1\cap A''$, we transform $G$ into a new bipartite graph $G_{ \{x\} }$ where we add a new vertex $x'$ and an edge $[x,x']$.
In order to check (ii.2), for each $i \notin N''$, let $E_i^1 = \{x,y\}$; we transform $G$ into a new bipartite graph $G_{ \{x,y\} }$ where we add two new vertices $x'$ and $y'$, and two edges $[x,x']$ and $[y,y']$.
Finally, we test if $\tau(G)=\tau(G_{ \{x\} })$ or if $\tau(G)=\tau(G_{ \{x,y\} })$ for one of these graphs, because all optimal vertex covers of $G_{ \{x\} }$ (respectively $G_{ \{x,y\} }$) must contain $x$ (respectively $\{x,y\}$).
\end{proof}

\begin{example}\label{example:topwidth-algo}
We illustrate the algorithm in the proof of Theorem~\ref{theo2ParetoCommittee}.
Let $k=2$ and consider the dichotomous profile, where we specify only the top equivalence class of each agent:
										\begin{align*}
										1&: \{a,c\} & 2&: \{b,c\}
										& 3&: \{b,d\} \\
										4&: \{d,e\} & 5 &: \{e,f\}
										\end{align*}
 Let $D=\{a,b\}$. We have $N' = W'=\emptyset$, $k'=0$, 
 $D\setminus W'=\{a,b\}$, $N''=\{1,2,3\}$, and $A''=\{a,b,c,d\}$.
We construct the graph
$G=(V,E)$: $V=\{v_a,v_b,v_c,v_d\}$ and $E=\{\{v_a,v_c\}, \{v_b,v_c\}, \{v_b,v_d\}\}$.
We have $\tau(G)=2=k-k'$.
Now we consider the four graphs $G_{\{d\}}$, resulting from the addition to $G$ of a new vertex $v_{d'}$ and edge $[v_d,v_{d'}]$, and
$G_{\{a,c\}}$, $G_{\{b,c\}}$ and $G_{\{b,d\}}$: $G_{\{a,c\}}$ results from the addition to $G$ of two new vertices $v_{a'}, v_{c'}$ and edges $[v_{a},v_{a'}]$ and $[v_c,v_{c'}]$, etc.
Two of these graphs have an optimal cover of size 2: $G_{\{d\}}$, with optimal cover $\{v_c,v_d\}$, and $G_{\{b,c\}}$, with optimal cover $\{v_b,v_c\}$. Therefore, $\{c,d\}$ and $\{b,c\}$ are $RS$-Pareto-improvements over $\{a,b\}$, and $\{a,b\}$ is not $RS$-efficient.
									\vspace{-1mm}
				
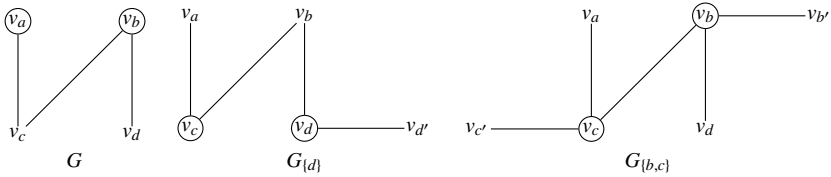
\begin{figure}[htbp]
\centering
\begin{tabular}{ccc}
\begin{tikzpicture}[scale=0.5]
\centering
\tikzstyle{pfeil}=[->]
\tikzstyle{onlytext}=[]
\tikzstyle{lab}=[font=\footnotesize\itshape]
\tikzstyle{agent}=[minimum size=1pt, inner sep=0.4pt]
\tikzstyle{cover}=[draw,circle,minimum size=1pt, inner sep=0.4pt]
\node[cover] (a) at (0,3) {$v_a$};
\node[cover] (b) at (3,3) {$v_b$};
\node[agent] (d) at (3,0) {$v_d$};
\node[agent] (c) at (0,0) {$v_c$};
\draw[-] (a) to [] (c);
\draw[-] (b) to [] (d);
\draw[-] (b) to [] (c);
\end{tikzpicture}
&
\begin{tikzpicture}[scale=0.5]
\centering
\tikzstyle{pfeil}=[->]
\tikzstyle{onlytext}=[]
\tikzstyle{lab}=[font=\footnotesize\itshape]
\tikzstyle{agent}=[minimum size=1pt, inner sep=0.4pt]
\tikzstyle{cover}=[draw,circle,minimum size=1pt, inner sep=0.4pt]
\node[agent] (a) at (0,3) {$v_a$};
\node[agent] (b) at (3,3) {$v_b$};
\node[cover] (d) at (3,0) {$v_d$};
\node[agent] (dp) at (6,0) {$v_{d'}$};
\node[cover] (c) at (0,0) {$v_c$};
\draw[-] (a) to [] (c);
\draw[-] (b) to [] (d);
\draw[-] (b) to [] (c);
\draw[-] (d) to [] (dp);
\end{tikzpicture}	
&
\begin{tikzpicture}[scale=0.5]
\centering
\tikzstyle{pfeil}=[->]
\tikzstyle{onlytext}=[]
\tikzstyle{lab}=[font=\footnotesize\itshape]
\tikzstyle{agent}=[minimum size=1pt, inner sep=0.4pt]
\tikzstyle{cover}=[draw,circle,minimum size=1pt, inner sep=0.4pt]
\node[agent] (cp) at (0,0) {$v_{c'}$};
\node[agent] (a) at (3,3) {$v_a$};
\node[cover] (b) at (6,3) {$v_{b}$};
\node[agent] (bp) at (9,3) {$v_{b'}$};
\node[agent] (d) at (6,0) {$v_d$};
\node[cover] (c) at (3,0) {$v_c$};
\draw[-] (a) to [] (c);
\draw[-] (b) to [] (d);
\draw[-] (b) to [] (c);
\draw[-] (c) to [] (cp);
\draw[-] (b) to [] (bp);
\end{tikzpicture}
\\
\small $G$ & \small $G_{\{d\}}$ & \small $G_{\{b,c\}}$												
\end{tabular}
\caption{Graphs corresponding to Example~\ref{example:topwidth-algo}}
\label{fig:fttcgraph}
\end{figure}
\end{example}
									
Note that finding an algorithm that computes a Pareto improvement over a committee can be used to  decide whether a  given a committee $D$ of size $k$, is Pareto optimal under the responsive set extension.

	\paragraph{Pareto optimality and Strategyproofness}

We now try to achieve both $RS$-efficiency 
and strategyproofness simultaneously.
A mechanism $f$ is \emph{strategyproof} if reporting truthful preferences is a dominant strategy with respect to the responsive set extension: $f(\pref) \pref_i^{RS} f(\pref_i',\pref_{-i})$ for all preference profiles $\pref$ and $(\pref_i',\pref_{-i})$.
Note that defining strategyproofness in this way with respect to the RS extension is stronger than defining it for any of the other four extensions considered in this paper. Nonetheless, we will present some positive results with respect to strategyproofness. 


A naive way of achieving $RS$-efficiency and Pareto optimality is to enumerate the list of possible winning sets and implement serial dictatorship over the possible outcomes as is done in voting~\citep{ABBH12a}. However, the number of possible outcomes is exponential and responsive preferences result in a partial order over the possible winning sets and not a complete and transitive order. 
This problem is solved by Algorithm~\ref{algo:comtRSD} which can be viewed as a computationally efficient serial dictatorship.

			\begin{algorithm}[h!]
\caption{Committee Voting Serial Dictatorship}
			  \label{algo:comtRSD}
			\renewcommand{\algorithmicrequire}{\wordbox[l]{\textbf{Input}:}{\textbf{Output}:}}
			 \renewcommand{\algorithmicensure}{\wordbox[l]{ \textbf{Output}:}{\textbf{Output}:}}
			\algsetup{linenodelimiter=\,}
				Input: $(N,A,\pref, k,\text{~permutation~} \pi \text{ of } N)$\lang{what is $v$?}\\
				Output:   $W \in S_k(A)$.\\  
			  \begin{algorithmic}[1]

		\STATE $L$ (last set to be refined) $\longleftarrow$ $A$
		\STATE  $r$ (number of alternatives yet to be fixed) $\longleftarrow$ $k$; $W$ $\longleftarrow$ $\emptyset$
		\STATE $i'$ (index of the permutation $\pi$) $\longleftarrow$ $1$
		\WHILE{$r \neq 0$ or $i'\neq n$}
	\STATE Agent $i= \pi(i')$ selects first $t$ equivalence classes such that $|\bigcup_{j=1}^t E_i^j\cap L|\geq r$ and $|\bigcup_{j=1}^{t-1} E_i^j\cap L|< r$.
	\STATE $W$ $\longleftarrow$ $W\cup (\bigcup_{j=1}^{t-1} E_i^j\cap L)$ (we say agent $i$ \emph{fixes} the alternatives in $\bigcup_{j=1}^{t-1} E_i^j$);
	\STATE $r$ $\longleftarrow$ $|\bigcup_{j=1}^t E_i^j\cap L|-|\bigcup_{j=1}^{t-1} E_i^j\cap L|$
	\STATE  $L$ $\longleftarrow$ $E_i^t$; $r_{i'}$ $\longleftarrow$ $r$
	\STATE Increment $i'$ by one
		\ENDWHILE
        \IF{$r>0$}
        \STATE pick any $r$ alternatives from $L$ and add them to $W$ 
        \ENDIF
				\RETURN $W$		
			 \end{algorithmic}
			\end{algorithm}

	\begin{theorem}\label{th:RS-ic}
		There exists a linear-time and strategyproof algorithm that returns a committee that is  Pareto optimal under the responsive set extension.
		\end{theorem}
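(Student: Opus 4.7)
The plan is to verify three things about Algorithm~\ref{algo:comtRSD}: linear running time, $RS$-efficiency of the output, and strategyproofness. Linear time is immediate: maintaining $L$ as a membership bitmap, each agent's loop iteration scans her preference list once, so the total work is $O\bigl(\sum_i (|E_i^1|+\cdots+|E_i^{k_i}|)\bigr)=O(nm)$, linear in the input size.

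The crux of the other two claims is the standard dominance characterization of $\pref_i^{RS}$: for $V,W\in S_k(A)$, $V\pref_i^{RS} W$ iff $|V\cap E_i^{\leq l}|\geq |W\cap E_i^{\leq l}|$ for every $l$, where $E_i^{\leq l}=E_i^1\cup\cdots\cup E_i^l$, with strict preference iff some inequality is strict. (One direction uses the injection directly; the converse is a short application of Hall's theorem to the bipartite graph whose edges are pairs $(a,b)\in W\times V$ with $b\pref_i a$.) Let $F_{q-1}$ and $L_{q-1}$ denote the sets $W$ and $L$ just before agent $i_q=\pi(q)$'s turn, and $r_{q-1}=k-|F_{q-1}|$. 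A direct reading of the algorithm under truthful reports yields the closed form
\[|W\cap E_{i_q}^{\leq l}| \;=\; |F_{q-1}\cap E_{i_q}^{\leq l}| \;+\; \min\bigl(|E_{i_q}^{\leq l}\cap L_{q-1}|,\ r_{q-1}\bigr),\]
because $i_q$'s top $t_q{-}1$ classes inside $L_{q-1}$ are fixed into $W$, and the remaining $r_q$ slots come from $E_{i_q}^{t_q}\cap L_{q-1}$.

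For strategyproofness, note that $F_{q-1}$ and $L_{q-1}$ depend only on the reports of $i_1,\dots,i_{q-1}$, so they are unchanged when $i_q$ deviates to some $\pref_i'$. Any resulting committee $W'$ still satisfies $F_{q-1}\subseteq W'$ and $W'\setminus F_{q-1}\subseteq L_{q-1}$ with $|W'\setminus F_{q-1}|=r_{q-1}$, since the residual $L$ only shrinks through subsequent turns. Hence $|W'\cap E_{i_q}^{\leq l}|\leq |F_{q-1}\cap E_{i_q}^{\leq l}|+\min(|E_{i_q}^{\leq l}\cap L_{q-1}|,\,r_{q-1})=|W\cap E_{i_q}^{\leq l}|$ for every $l$, so $W\pref_{i_q}^{RS} W'$ by the characterization.

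For $RS$-efficiency the same upper bound drives an inductive argument. Assume $W'\pref_{i_q}^{RS} W$ for every $q$; I show by induction on $q$ that $F_q\subseteq W'$ and $W'\setminus F_q\subseteq L_q$ with $|W'\setminus F_q|=r_q$. The base is trivial. For the step, the inductive hypothesis yields the same upper bound on $|W'\cap E_{i_q}^{\leq l}|$ as in the strategyproofness argument, while the assumption gives the matching lower bound; equality at $l=t_q{-}1$ forces $F_q\setminus F_{q-1}\subseteq W'$, and at $l=t_q$ forces $W'\setminus F_q\subseteq L_q$ with the right cardinality. After the final step we obtain $|W'\cap E_{i_q}^{\leq l}|=|W\cap E_{i_q}^{\leq l}|$ for every $q$ and every $l$, i.e.\ $W'\indiff_{i_q}^{RS} W$ for all $q$, contradicting any strict improvement. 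The main delicate point is the terminal case $r_n>0$ (arbitrary pick from $L_n$), but by construction $L_n$ is contained in a single $\pref_{i_q}$-equivalence class for every $q\leq n$, so any two completions of $F_n$ by a size-$r_n$ subset of $L_n$ are $RS$-equivalent for every agent; this closes the argument.
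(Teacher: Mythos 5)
Your proof is correct and, at its core, takes the same route as the paper's: Algorithm~\ref{algo:comtRSD} is analyzed as a sequential refinement in which agent $\pi(q)$, given what her predecessors have fixed, is handed one of her most preferred committees among those still feasible, which yields both strategyproofness and $RS$-efficiency. The difference is one of rigor rather than strategy: the paper's proof states this invariant informally, whereas you supply the machinery that makes it checkable --- the cumulative-count characterization of $\pref_i^{RS}$ on equal-size sets via Hall's theorem, the closed form $|W\cap E_{i_q}^{\leq l}| = |F_{q-1}\cap E_{i_q}^{\leq l}| + \min(|E_{i_q}^{\leq l}\cap L_{q-1}|,\, r_{q-1})$, and the explicit observation that the terminal arbitrary pick from $L_n$ is harmless because $L_n$ lies inside a single equivalence class of every agent (a case the paper's proof does not address). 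What your version buys is a verifiable argument; what the paper's buys is brevity. One remark: your reading silently repairs two slips in the pseudocode --- line~7 as written sets $r$ to $|E_i^t\cap L|$ rather than to the number of unfilled slots, and line~8 should set $L$ to $E_i^t\cap L$ so that $L$ only shrinks; your conventions $r_{q-1}=k-|F_{q-1}|$ and $L_q\subseteq L_{q-1}$ are the intended semantics, and under that reading your argument is sound.
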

		\begin{proof}

	Consider Algorithm~\ref{algo:comtRSD}.
We show that at each stage $i'$, agent $\pi(i')$, implicitly refines the set of feasible committees to the maximal set of most preferred outcomes from the set by providing additional constraints.
This is true for the base case $i'=1$.
Now assume it holds from $1$ to $i'$.
Note that $L$ contains all those alternatives that are strictly less preferred by agents in $\{\pi(1),\ldots, \pi(i')\}$ than the ones they respectively  \emph{fixed}. Moreover, each agent in $\{1,\ldots, \pi(i')\}$ is indifferent between the alternatives in $L$.
As for $\pi(i'+1)$, he fixes the best $|\bigcup_{j=1}^{t-1} E_{\pi(i'+1)}^j\cap L|$ alternatives in $L$ where $t$ is the value such that $|(\bigcup_{j=1}^t E_{\pi(i'+1)}^j)\cap L|\geq r_{i'}$ and $|\bigcup_{j=1}^{t-1} E_{\pi(i'+1)}^j\cap L|< r_{i'}$.
For $E_{\pi(i'+1)}^t$, the agent only requires that
 $r_{i'+1}=|(\bigcup_{j=1}^{t-1} E_{\pi(i'+1)}^j) \cap L|-|(\bigcup_{j=1}^{t-1} E_{\pi(i'+1)}^j) \cap L|$ alternatives are selected from his equivalence class $E_{\pi(i'+1)}^t$ which is ensured by the definition of the algorithm.
		It follows from the argument that the returned set is Pareto optimal under the responsive set extension.
	For strategyproofness, when an agent $\pi(i')$ turn comes, it only has a choice over \emph{fixing} the alternatives in $L$ and requiring $r_{i'}$ alternatives from his equivalence class $E_{\pi(i')}^t$. In this case the algorithm already chooses one of the best possible committees for the agent.
	 \end{proof}

	Note that for $k=1$, the algorithm is equivalent to serial dictatorship as formalized by \citet{ABB13b}.
	Note that a committee that is Pareto optimal under the responsive set extension may not be a result of serial dictatorship. This holds even for $k=1$ and the basic voting setting.

The problem with the serial dictatorship algorithm formalized is that it overly favours the agent that is the first in the permutation. One way to limit his power is to let him choose only $\ceil{k/n}$ alternatives.
We note that this attempt at having a fairer extension of serial dictatorship comes at an expense because strategyproofness is compromised. Consider the profile in which 1 has preferences $a, b, c$ and 2 has preferences $a, c, b$. For $k=2$, and permutation $12$, the outcome is $\{a,c\}$. But if agent 1 reports $b,a,c$, then the outcome is $\{a,b\}$.

\section{`Best' Set Extension}

Next, we consider Pareto optimality with respect to ${\cal B}$,
 which has been used for defining many rules (see Section \ref{related}).

\begin{theorem}
							\label{theo3ParetoCommittee}
Unless P$=$NP, there is no polynomial-time algorithm to compute a Pareto improvement over a committee with respect to ${\cal B}$, even for dichotomous preferences and $Topwidth(\pref) = 2$.
\end{theorem}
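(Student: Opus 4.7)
The plan is to reduce from \textsc{Vertex Cover} (VC), which is NP-hard, thereby showing that even deciding the existence of a Pareto improvement is NP-hard and hence that no polynomial-time algorithm can compute one unless $P = NP$. Given a VC instance $\langle G = (V, E), q \rangle$, I will construct a Pareto improvement instance under $\mathcal{B}$ with dichotomous preferences and $Topwidth(\pref) = 2$.

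The construction will use alternatives $A = V \cup \{p_e : e \in E\} \cup \{y\}$, where $p_e$ is an auxiliary ``dummy'' alternative for each edge $e$ and $y$ is a distinguished bonus alternative. For each edge $e = \{u,v\} \in E$, I will introduce two edge-pair agents whose top classes are $\{u, p_e\}$ and $\{v, p_e\}$ respectively (both of size $2$), together with a bonus agent whose top class is $\{y\}$ (size $1$); additional singleton-top-class agents on certain auxiliaries can be added to pin down the swap structure. The starting committee $D$ will contain all of the $p_e$'s (together with any forced auxiliaries), which trivially satisfies all the edge-pair agents via the $p_e$'s but not the bonus agent. Crucially, $D$ is constructible in polynomial time directly from the input, without having to solve VC on $G$.

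The gadget is arranged so that any Pareto improvement $W$ must contain $y$ (satisfying the bonus) and, for every edge $e = \{u,v\}$, must either retain $p_e \in W$ or include both endpoints $u,v \in W$ (otherwise one of the two edge-pair agents is lost). With the committee size $k$ tuned in terms of $|E|$ and $q$, the number of $p_e$'s that can be swapped out is tightly controlled, and the vertices of $G$ which are forced into $W$ correspond exactly to a vertex cover of size at most $q$ of the edges whose $p_e$'s were dropped. In the forward direction, a VC $C \subseteq V$ with $|C| \leq q$ yields an explicit Pareto improvement $W$ obtained by replacing the relevant $p_e$'s with vertices of $C$ and adding $y$; in the reverse direction, the covering constraints inherited from the edge-pair agents of the dropped $p_e$'s force $W \cap V$ to be a VC of $G$ of size at most $q$.

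The main obstacle is to tune the parameters so that the swap structure of a Pareto improvement corresponds \emph{exactly} to a vertex cover of size $\leq q$, rather than merely to a denser combinatorial object such as the existence of a cycle (which is what a naive budget yields and which is polynomial). This calibration likely requires adding a few additional singleton agents (or a carefully chosen second pair in the bonus top class tied to a specific vertex $v^{*} \in V$) that absorb precisely the right amount of slack in the budget for $W$ and prevent the ``dense subgraph loophole''. Once the parameters are matched, checking both reduction directions is routine, establishing NP-hardness and hence the theorem.
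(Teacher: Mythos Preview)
Your proposal has a genuine gap that you yourself identify but do not resolve. In your construction, dropping $p_e$ forces \emph{both} endpoints $u,v$ into $W$; consequently, balancing the budget after inserting $y$ amounts to finding a set $F\subseteq E$ with $|F|=|V(F)|+1$, i.e., a subgraph with one more edge than vertices. As you note, this is cycle detection, which is polynomial. Your suggested fix---``a few additional singleton agents'' or a bonus top tied to some $v^{*}$---is left entirely unspecified, and it is far from clear that any such calibration exists. With $Topwidth(\pref)=2$ and dichotomous preferences, the constraint system for a single Pareto improvement step is essentially a size-$\le 2$ hitting-set instance relative to the agents already covered by $D$; a single step only needs to satisfy \emph{one} additional agent, so the decision ``does a $\mathcal{B}$-Pareto improvement exist?'' does not obviously encode $\textsc{Vertex Cover}$ via a many-one reduction. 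You have not shown that it does.

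The paper's proof avoids this difficulty by taking a genuinely different route: a \emph{Turing} reduction that applies the hypothetical Pareto-improvement oracle \emph{iteratively}. Starting from $W_0=V$ (trivially a vertex cover), each call must preserve the vertex-cover property on $W_i\setminus D$ (otherwise some edge agent gets worse) while strictly enlarging $W_i\cap D$ (the only agents who can strictly improve are the dummy agents with singleton tops $\{d_j\}$). After $|V|-k$ calls one has $D\subseteq W$ if and only if $G$ admits a vertex cover of size $k$. The crucial point is that no single Pareto-improvement step encodes VC; it is the \emph{accumulation} over $|V|-k$ steps that forces the cover down to size $k$. Your many-one strategy tries to compress this into one step, and that is precisely where it breaks.
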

\begin{proof}
We show that if it is not the case, then we can solve polynomially the vertex cover decision problem. Consider an instance of {\sc vertex cover}
given by a simple graph $G=(V,E)$ with $V=\{v_1,\dots,v_q\}$ and $E=\{e_1,\dots,e_r\}$,
and an integer $k$.
Assume the existence of a polynomial-time algorithm $\texttt{Algo}$  that computes a Pareto improvement over a committee with respect to ${\cal B}$ when \mbox{$Topwidth(\pref)\leq 2$}: given a profile $\pref$ and a set of $k$ alternatives $W$, \texttt{Algo}$(\pref,W)$ returns, in time polynomial in $|\pref|$, \texttt{Yes} if $W$ is Pareto optimal with respect to ${\cal B}$, and otherwise returns a $k$-set $U$ of alternatives which Pareto dominates $W$. We will now prove by applying at most $n$ times  \texttt{Algo} with different inputs that we can decide in polynomial if $G$ has a vertex cover $C\subseteq V$ of $G$ with $|C|\leq k$.
We construct the following profile $P$:

						 \begin{itemize}\setlength\itemsep{0em}
						\item[$\bullet$] The set of agents is $N=\{1,\dots,q+r-k\}$, where agent $i\leq r$  corresponds to edge $e_i\in E$.

						\item[$\bullet$] The set of $2q-k$ alternatives is 
						$A=V\cup D$ where $D=\{d_1,\dots,d_{q-k}\}$.

						\item[$\bullet$] 
						Let $e_i=[u,v]\in E$ be an edge of $G$; the preferences of
						agent $i$ for $i=1,\dots, r$  are: $$ i:~\{u,v\}, D\cup (V\setminus\{u,v\}).$$

						\noindent The preferences of the last set of  $q-k$ agents $\{r+1,\dots,q+r-k\}$ are given by: for
$i=1,\dots, q-k,$ $$r+i:~d_i, V\cup (D\setminus\{d_i\}).$$
						 \end{itemize}

						The reduction is clearly done within polynomial time and the set of preferences given by $\pref$ are dichotomous. \lang{Removed \begin{quote} We now focus of set of alternatives of size $n$.\end{quote} }

						\smallskip
Consider the following inductive procedure:
$W_0=V$ and for $i\geq 1$, $W_i= $ \texttt{Algo} $ (\pref, W_{i-1})$ if $W_{i-1}$ is not Pareto optimal with respect to ${\cal B}$, otherwise we return $W_{i-1}$. Let $W=W_{q-k}$ be the solution output after $q-k$ calls. Because  \texttt{Algo} is polynomial, the whole
procedure is polynomial.

 We claim that $G$ has a vertex cover of size $k$ iff $D\subseteq W$.
 We will first prove by induction that at each step $i$, $W_i\setminus D$ is a vertex cover of $G$. For the initial step, it is valid because $V$ is a vertex cover of $G$. Assume that it is true for $i<q-k$ and let us prove that $W_{i+1}\setminus D$ is a vertex cover of $G$. If it is not the case, some edge $e_j=[u,v]\in E$ is not covered. By assumption, $e_j$ is covered by $W_i\setminus D$. This implies
 \mbox{$W_{i} \succ_j^{\B} W_{i+1}$}, which is a contradiction. Hence, $W_{i+1} \succsim_j^{\B} W_{i}$.
 From this hypothesis, we deduce $D\setminus W_{i+1} \succsim_j^{\B} D\setminus W_{i}$ for $j=n+1,\dots,2n-k$ with a strict preference for some agent. Equivalently, $D\setminus W_{i} \subset D\setminus W_{i+1}$. In conclusion, after $q-k$ recursive calls, $|W\setminus D|\leq k$ if and only if $D\subseteq W$.
\end{proof}

\begin{theorem}
							\label{th:comput-best}
Computing a ${\cal B}$-efficient committee is NP-hard, even for dichotomous preferences.
\end{theorem}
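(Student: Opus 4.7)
The plan is to reduce from VERTEX COVER, reusing the dichotomous construction introduced for Theorem~\ref{theo3ParetoCommittee}. Given an instance $(G=(V,E),k)$ with $|V|=q$ and $|E|=r$, I form the alternative set $A = V \cup D$ with $D=\{d_1,\ldots,d_{q-k}\}$, the agent set consisting of one agent per edge plus $q-k$ dummy agents, and a target committee size of $q$; the edge agent for $e_i=[u,v]$ has top class $\{u,v\}$ and the dummy agent $r+i$ has top class $\{d_i\}$.

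The key structural observation is that under $\mathcal{B}$ with dichotomous preferences a committee $W$ \emph{satisfies} agent $i$ iff $W \cap E_i^1 \neq \emptyset$, and $W' \succsim_i^{\mathcal{B}} W$ iff $W'$ satisfies $i$ whenever $W$ does. Consequently, if any committee $W^{*}$ satisfies every agent, then $W^{*}$ Pareto-dominates every committee that leaves some agent unsatisfied, so every $\mathcal{B}$-efficient committee must itself satisfy every agent. In our construction, a size-$q$ committee $W$ satisfies every agent iff $D \subseteq W$ (forced by the dummy agents) and $W \cap V$ covers all edges; since $|D| = q - k$ and $|W| = q$, this is exactly the condition that $W \cap V$ is a vertex cover of size $k$. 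Conversely, any vertex cover of size $\le k$ can be padded up to $k$ vertices and combined with $D$ to produce a fully-satisfying committee of size $q$.

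Putting these pieces together yields NP-hardness: a hypothetical polynomial-time algorithm producing a $\mathcal{B}$-efficient committee, applied to the constructed instance, outputs a committee that satisfies every agent if and only if $G$ has a vertex cover of size at most $k$; full satisfaction is testable in linear time from the output, so we would decide VERTEX COVER in polynomial time. The main step requiring care is the dominance observation—establishing that, in the dichotomous setting, the inclusion order on satisfied-agent sets \emph{exactly} captures $\mathcal{B}$-Pareto-dominance, so that the existence of a single all-satisfying committee propagates into a structural constraint on \emph{every} $\mathcal{B}$-efficient committee. Everything else is immediate from the arithmetic $|D|=q-k$, $|W|=q$, and the NP-hardness of VERTEX COVER.
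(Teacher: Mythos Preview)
Your argument is correct and rests on the same key observation as the paper: under dichotomous preferences, $\mathcal{B}$-Pareto-dominance is exactly strict inclusion of satisfied-agent sets, so whenever a fully-satisfying committee exists, every $\mathcal{B}$-efficient committee must be fully-satisfying. The paper exploits this via a much leaner reduction from \textsc{Hitting Set}: take $A=X$, one agent per set $C_i$ with top class $C_i$, and committee size $k$; then a $\mathcal{B}$-efficient committee is fully-satisfying iff a hitting set of size $k$ exists. Your route through the Theorem~\ref{theo3ParetoCommittee} construction works, but the dummy alternatives $D$ and dummy agents are dead weight here---they were needed in Theorem~\ref{theo3ParetoCommittee} only because the \emph{starting} committee had to be specified, whereas for computing a $\mathcal{B}$-efficient committee from scratch you could simply set $A=V$, one agent per edge with top class $\{u,v\}$, and committee size $k$ (this is just the paper's reduction specialised to \textsc{Vertex Cover}). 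On the positive side, your construction---or its simplified version---yields the mild strengthening $Topwidth(\pref)\le 2$, which the paper's \textsc{Hitting Set} reduction does not give.
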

		 	\begin{proof}
We give a 
reduction from {\sc Hitting Set}. Let
 $N=\{1,\dots, \ell\}$, $A=X$ and for each $i\in N$, $i$'s dichotomous  preferences are $i: C_i, (X\setminus C_i).$
If there exists a polynomial-time algorithm to compute a  ${\cal B}$-efficient
committee, it will return a committee in which each agent gets a most preferred alternative if such a committee exists.
			But such a committee corresponds to a hitting set of size $k$.
		\end{proof}

\section{Downward Lexicographic Set Extension}

We point out that for dichotomous preferences, the responsive set extension coincides with the 
downward lexicographic  set extension.
Hence we get a corollary of our results for responsive preferences:

\begin{corollary}\label{cor:verify-dl}
						Checking whether a committee is  $DL$-efficient
						 is coNP-complete, even for dichotomous preferences and
		$Topwidth(\pref) \geq 3$.
\end{corollary}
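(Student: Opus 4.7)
The plan is to establish this corollary by reducing it to Theorem~\ref{theoParetoCommittee} via the observation made just before the statement: under dichotomous preferences, the $RS$ and $DL$ set extensions coincide as orders on $S_k(A)$. Once that coincidence is verified, the entire hardness construction in the proof of Theorem~\ref{theoParetoCommittee} (which already produces dichotomous profiles with $Topwidth(\pref)=3$) transfers verbatim, and coNP membership is immediate from a short certificate.

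To justify the coincidence, fix any agent $i$ with dichotomous preferences, partition $A$ as $E_i^1 \cup E_i^2$, and let $W,V \in S_k(A)$. Since $|W|=|V|=k$, we have $|W \cap E_i^1| + |W \cap E_i^2| = |V \cap E_i^1| + |V \cap E_i^2|$, so the top-class counts already determine the full DL comparison. First I would show that $W \succsim_i^{RS} V \Leftrightarrow |W \cap E_i^1| \geq |V \cap E_i^1|$: the forward direction follows because an $RS$ injection must map every element of $V \cap E_i^1$ into $W \cap E_i^1$ (no element of $E_i^2$ is weakly preferred to an element of $E_i^1$); conversely, if $|W \cap E_i^1| \geq |V \cap E_i^1|$ one constructs an injection by matching elements of $V \cap E_i^1$ into $W \cap E_i^1$ and then matching the remaining elements of $V$ arbitrarily into $W$. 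An analogous argument shows $W \succsim_i^{DL} V \Leftrightarrow |W \cap E_i^1| \geq |V \cap E_i^1|$, with strict preferences lining up likewise.

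Next I would use this equivalence to lift Theorem~\ref{theoParetoCommittee}. A committee $W$ is $DL$-Pareto-dominated by some $W'$ iff it is $RS$-Pareto-dominated by $W'$ (agent by agent, the two relations coincide on profiles with $Topwidth(\pref)\geq 1$ that are dichotomous). Hence $DL$-efficiency and $RS$-efficiency are the same predicate on any dichotomous profile, and the coNP-hardness reduction from \textsc{Vertex Cover} given in the proof of Theorem~\ref{theoParetoCommittee} (with $Topwidth(\pref)=3$) yields coNP-hardness of verifying $DL$-efficiency under the same parameter restriction.

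Finally, membership in coNP is routine: a non-$DL$-efficient committee admits a Pareto-improving committee $W' \in S_k(A)$ of polynomial size, and checking $W' \succsim_i^{DL} W$ for all $i$ together with strict preference for some agent is a matter of comparing the sequences $(|W' \cap E_i^l|)_l$ and $(|W \cap E_i^l|)_l$ in polynomial time. The only subtlety is keeping the coincidence argument clean when strict improvement is involved, but this follows by the same counting identity; I do not expect any genuine obstacle, since the corollary is essentially a direct inheritance from Theorem~\ref{theoParetoCommittee}.
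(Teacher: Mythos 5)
Your proposal is correct and follows exactly the paper's route: the corollary is derived from Theorem~\ref{theoParetoCommittee} by observing that on dichotomous profiles the $RS$ and $DL$ extensions coincide on $S_k(A)$, so the two efficiency notions are the same predicate there. Your only addition is to spell out the counting argument ($W \succsim_i^{RS} V \Leftrightarrow |W\cap E_i^1| \geq |V\cap E_i^1| \Leftrightarrow W \succsim_i^{DL} V$, using $|W|=|V|=k$) that the paper leaves implicit, and that argument is sound.
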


		Note that Algorithm~\ref{algo:comtRSD} returns a  $DL$-efficient  committee. The reason is that each agent in her turn refines the set of possible outcomes to her most preferred subset of outcomes. Each committee is the refined set is at least as preferred with respect to RS (and hence with respect to DL) to all committees in the set of possible outcomes. 

	\begin{theorem}\label{th:dl-compute}
		There exists a linear-time and strategyproof algorithm that returns a  $DL$-efficient committee.
		\end{theorem}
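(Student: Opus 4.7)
The plan is to reuse Algorithm~\ref{algo:comtRSD} (Committee Voting Serial Dictatorship) verbatim from the proof of Theorem~\ref{th:RS-ic} and to argue that the same procedure is linear-time, DL-strategyproof, and produces a $DL$-efficient committee.

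Linear-time complexity is inherited directly from the analysis in the proof of Theorem~\ref{th:RS-ic}, as the algorithm is unchanged. For DL-strategyproofness, I would invoke the remark that $S \pref_i^{RS} T$ implies $S \pref_i^{DL} T$ for equal-sized committees. Theorem~\ref{th:RS-ic} establishes that truthful reporting yields, for each agent, an outcome that RS-weakly-dominates any outcome obtainable by a unilateral deviation; the same inequalities therefore hold in DL, ruling out any manipulation that strictly improves an agent with respect to $\pref_i^{DL}$.

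The substantive part is $DL$-efficiency. I would show that the algorithm realises serial dictatorship over the DL pre-order on $S_k(A)$. Define the feasible set $\mathcal{F}_{i'}$ after the $i'$-th iteration as the collection of size-$k$ committees consistent with the alternatives currently in $W$ and with the remaining $r$ slots chosen from $L$. By induction on $i'$ one verifies the invariant that $\mathcal{F}_{i'}$ equals the set of committees that are $DL$-best for $\pi(1)$ among all of $S_k(A)$, $DL$-best for $\pi(2)$ among that set, and so on through $\pi(i')$. The inductive step rests on the fact that greedily absorbing $E_{\pi(i'+1)}^1, E_{\pi(i'+1)}^2, \ldots$ as long as they fit is precisely what lexicographically maximises $(|W\cap E_{\pi(i'+1)}^1|, |W\cap E_{\pi(i'+1)}^2|, \ldots)$, while $\pi(i'+1)$ remains indifferent among all extensions that respect these fixed classes and differ only inside the contested class $E_{\pi(i'+1)}^t$, which is then passed to the next agent as the new $L$. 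With the invariant in hand, any candidate DL-Pareto-improvement $W'$ over the output $W$ must be DL-weakly-preferred by $\pi(1)$, which forces $W' \in \mathcal{F}_1$; iterating gives $W' \in \mathcal{F}_n$; but every committee in $\mathcal{F}_n$ is DL-indifferent to $W$ for every agent, contradicting the existence of an agent with a strict DL-improvement.

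The main obstacle is bookkeeping in the inductive step: the pseudocode update $L \leftarrow E_i^t$ should be read as $L \leftarrow E_i^t \cap L_{\text{old}}$ so that constraints imposed by earlier agents continue to hold, and the boundary case in which $r$ reaches $0$ before the $n$-th iteration (so that $\mathcal{F}_{i'}$ collapses to $\{W\}$ and the remaining iterations are vacuous) must be handled explicitly. Once these details are made precise, the invariant and hence $DL$-efficiency follow in the same way that ordinary serial dictatorship yields Pareto-efficient outcomes in single-winner voting.
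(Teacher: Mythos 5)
Your proposal is correct and follows essentially the same route as the paper: reuse Algorithm~\ref{algo:comtRSD}, inherit linear time and strategyproofness from Theorem~\ref{th:RS-ic} via the implication $S \pref_i^{RS} T \Rightarrow S \pref_i^{DL} T$, and obtain $DL$-efficiency from the invariant that each agent in turn refines the feasible set to her $DL$-most-preferred committees. Your write-up merely makes explicit the induction and the bookkeeping (reading $L \leftarrow E_i^t$ as $L \leftarrow E_i^t \cap L$) that the paper leaves implicit in its one-paragraph justification.
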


\section{`Worst' Set Extension}

In contrast to all the other set extensions considered in the paper, Pareto optimality with respect to the `worst' set extension can be checked in polynomial time.

						\begin{theorem}
													\label{th:verify-worst} There exists a polynomial-time algorithm that checks whether a committee is  ${\cal W}$-efficient
and computes a Pareto improvement over it if possible.
													\end{theorem}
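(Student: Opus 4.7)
The plan is to exploit the fact that under the $\mathcal{W}$ extension each agent's preference depends on a single summary statistic of the committee, namely its minimum, so the Pareto improvement problem reduces to a simple feasibility check on subsets of alternatives.

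Let $W \in S_k(A)$ be the given committee, and for each agent $i \in N$ let $w_i \in \min_{\pref_i}(W)$ be (any) worst alternative of $W$ according to $i$. I would introduce two threshold sets per agent:
\[
B_i = \{a \in A \midd a \pref_i w_i\}, \qquad A_i^+ = \{a \in A \midd a \succ_i w_i\},
\]
and observe, directly from the definition of $\mathcal{W}$, that for any $W' \in S_k(A)$:
\[
W' \succsim_i^{\mathcal{W}} W \iff \min\nolimits_{\pref_i}(W') \pref_i w_i \iff W' \subseteq B_i,
\]
and similarly $W' \succ_i^{\mathcal{W}} W \iff W' \subseteq A_i^+$.

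Putting these together, $W'$ is a Pareto improvement over $W$ under $\mathcal{W}$ if and only if $W' \subseteq C := \bigcap_{j \in N} B_j$ and $W' \subseteq A_i^+$ for some $i \in N$, i.e.\ $W' \subseteq C \cap A_i^+$ for some $i$. This yields an immediate algorithm: compute $C$ and each $A_i^+$ in $O(mn)$ time, then for each $i \in N$ test whether $|C \cap A_i^+| \geq k$. If so, any $k$-subset of $C \cap A_i^+$ is a Pareto improvement; otherwise, no Pareto improvement exists and $W$ is $\mathcal{W}$-efficient. The total running time is polynomial (indeed $O(mn)$).

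There is no real obstacle; the only thing to verify carefully is the equivalence between $W' \succ_i^{\mathcal{W}} W$ and $W' \subseteq A_i^+$, which follows because $\min_{\pref_i}(W') \succ_i w_i$ forces every element of $W'$ to be strictly better than $w_i$ for agent $i$ (and conversely), and to check that $W$ itself is correctly excluded as an ``improvement'' over itself (indeed $w_i \in W$ but $w_i \notin A_i^+$, so $W \not\subseteq A_i^+$ for any $i$). This confirms correctness of the characterization, hence of the algorithm.
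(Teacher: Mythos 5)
Your proposal is correct and follows essentially the same route as the paper: the paper's set $B_i$ (defined by removing from $A$ everything agent $i$ ranks at or below $\min_{\pref_i}(W)$ and everything any other agent $j$ ranks strictly below $\min_{\pref_j}(W)$) is exactly your $C \cap A_i^+$, and both arguments reduce the question to checking whether $|C \cap A_i^+| \geq k$ for some agent $i$. Your write-up makes the threshold characterization of $\succsim_i^{\mathcal{W}}$ and $\succ_i^{\mathcal{W}}$ slightly more explicit, but the algorithm and its justification are the same.
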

								 	\begin{proof}
Let $W \in S_k(A)$.
For each $i\in N$, let $E_i^{t_i}$ be the least preferred equivalence class such that $E_i^{t_i}\cap W\neq \emptyset$.
We want to check whether there is a $k$-set $D$ of alternatives in which
at least some agent $i\in N$ gets a strictly better outcome and all the other agents get at least as preferred an outcome.
We check this as follows.
For $i\in N$, let
$B_i=A\setminus ((\bigcup_{\ell=t_i}^{k_i}E_i^{\ell}) \cup \bigcup_{j\in N\setminus \{i\}}\bigcup_{\ell=t_j+1}^{k_j} E_{j}^{\ell}))$. We check whether $|B_i|\geq k$ or not. If $|B_i|\geq k$, we know that there exists a subset of $B_i$, that is strictly more preferred by $i\in N$ and at least as preferred by each agent. The reason is that $B_i$ contains a more preferred worst alternative for agent $i$ than $D$ and contains at least as preferred worst alternative for other agents $j$ than $D$.
If $|B_i|<k$, then this means that a Pareto improvement with $i$ strictly improving is only possible if the size of the winning set is less than $k$ which is not feasible.
								\end{proof}
								
								We  now consider
								strategyproofness  together with ${\cal W}$-efficiency.
								We first note that Algorithm~\ref{algo:comtRSD} may not return a $\mathcal{W}$-efficient outcome. However, we construct a suitable strategyproof and $\mathcal{W}$-efficient by formalising an appropriate serial dictatorship algorithm for the worst set extension.

														\begin{theorem}
																					\label{th:sp-worst} There exists a linear-time and strategyproof algorithm that returns a   ${\cal W}$-efficient committee.
\end{theorem}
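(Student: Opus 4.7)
The plan is to adapt Algorithm~\ref{algo:comtRSD} by mirroring it: instead of successively \emph{fixing} the top equivalence classes of each agent, we will successively \emph{exclude} their bottom equivalence classes. Informally, since under $\mathcal{W}$ an agent only cares about the worst alternative in the committee, a dictator should be allowed to push the worst alternative of the committee as high as possible in her ranking; subsequent agents then refine within whatever flexibility remains.

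More precisely, I would maintain a set $L$ of ``still undecided'' alternatives, a set $X$ of alternatives already excluded, and a counter $r$ of how many alternatives remain to be excluded (initialized to $L=A$, $X=\emptyset$, $r=m-k$). Processing agents in the order $\pi$, agent $i=\pi(i')$ finds the largest $t$ such that the union of her $t-1$ worst equivalence classes restricted to $L$ has fewer than $r$ elements, while the union of her $t$ worst has at least $r$. She then excludes all of $\bigcup_{j>t} E_i^j \cap L$ for free, moves on to the next agent, and $L$ is updated to $E_i^t \cap L$. The loop terminates when $r=0$ (any remaining $L$ alternatives are kept) or when all agents have spoken (any remaining $r$ alternatives are picked arbitrarily from $L$). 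The committee returned is $A\setminus X$ (after removing any final arbitrary picks from $L$).

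The next step is the invariant proof, parallel to the one in Theorem~\ref{th:RS-ic}. I would show by induction on $i'$ that after agent $\pi(i')$'s turn, every feasible completion yields, for each of the agents $\pi(1),\ldots,\pi(i')$, a worst-alternative that is among the most preferred possible under the constraints fixed by earlier agents. In particular, the worst alternative of $\pi(i')$ in the final committee must lie in $E_i^t\cap L$ (her current level), which is exactly the best attainable $\min_{\pref_i}$-level given the earlier exclusions; this gives $\mathcal{W}$-efficiency (any Pareto-$\mathcal{W}$-improvement would contradict the maximality of some agent's worst-level). Strategyproofness follows from the same invariant: when $\pi(i')$'s turn comes, the algorithm deterministically produces a committee whose minimum (under $\pref_i$) is in the best possible equivalence class she can still guarantee; any misreport either leaves this minimum in the same class (no gain) or moves it to a strictly worse class (strict loss), and $\pi(i')$ has no influence on the earlier refinements. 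Finally, since each agent's turn touches each of her equivalence classes at most once, the total running time is $O(nm)$, i.e., linear in the input size.

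The main obstacle is the careful bookkeeping in the $\mathcal{W}$-efficiency argument: unlike Algorithm~\ref{algo:comtRSD}, where the dictator genuinely fixes top alternatives that must appear, here the dictator only fixes an equivalence-class upper bound on the worst element, and one must verify that this bound is tight for all agents simultaneously so that no Pareto-$\mathcal{W}$-improving committee exists. I expect the cleanest way to handle this is to observe inductively that $L$ is always contained in an equivalence class of every previously processed agent, so the minima of those agents on any completion lie in a class strictly above $L$ unless they are forced into $L$ itself, which is precisely the class the algorithm optimized for.
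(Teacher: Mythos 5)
Your overall plan---a serial dictatorship in which each agent eliminates her worst equivalence classes---is the right one and is essentially the paper's, but the specific update $L \leftarrow E_i^t \cap L$ imports a constraint from Algorithm~\ref{algo:comtRSD} that is correct for $RS$ but wrong for $\mathcal{W}$, and it breaks $\mathcal{W}$-efficiency. By shrinking the pool of still-excludable alternatives to agent $i$'s boundary class, you implicitly commit to \emph{keeping} every alternative in $i$'s better-than-boundary classes. Under $\mathcal{W}$, agent $i$ does not need those alternatives kept---she only needs the committee to avoid her classes below the boundary---and forcing them in can block a later agent from excluding her own worst alternatives. Concretely, take $k=2$, $A=\{a,b,c,d\}$, agent $1:\ \{a\},\{b,c\},\{d\}$ and agent $2:\ \{b,c,d\},\{a\}$, with $\pi=(1,2)$. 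Your algorithm has agent $1$ exclude $\{d\}$ and set $L=\{b,c\}$, so $a$ is kept; agent $2$ can then only exclude within $\{b,c\}$, and the output is $\{a,b\}$ or $\{a,c\}$, whose $\min_{\pref_2}$ is $a$, agent $2$'s worst class. But $\{b,c\}$ leaves agent $1$'s minimum in the same class $\{b,c\}$ and strictly improves agent $2$, so the output is not $\mathcal{W}$-efficient. The same example shows that your invariant (``the worst alternative of $\pi(i')$ lies in the best attainable class given the earlier constraints'') already fails at $i'=2$: the constraint agent $1$ actually needs is only $W\subseteq\{a,b,c\}$, not $a\in W$.

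The fix---and this is what the paper's proof does---is to keep a single shrinking pool $A'$ (initialized to $A$) and let each agent in turn delete as many of her \emph{own} least preferred equivalence classes, intersected with the whole of $A'$ rather than with the previous agents' boundary classes, as possible subject to $|A'|\geq k$; the output is any $k$-subset of the final $A'$. Because each earlier agent's guarantee is the downward-closed condition ``$W$ is a $k$-subset of $\bigcup_{j\leq t_i}E_i^j$'', later deletions anywhere in $A'$ can never hurt her, and since the classes strictly above $E_i^{t_i}$ that survive to her turn contain fewer than $k$ alternatives, her minimum lands exactly in $E_i^{t_i}$; $\mathcal{W}$-efficiency and strategyproofness then follow by the induction you sketch, and your running-time argument carries over unchanged to the corrected algorithm.
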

																 	\begin{proof}
												Consider the agents in a permutation $\pi$. The set of alternatives $A'$ is initialized to $A$. We reduce the set $A'$ while ensuring that it of size at least $k$. The next agent $i$ in the permutation comes and deletes the maximum number of least preferred equivalence classes from his preferences and the corresponding alternatives in $A'$ while ensuring that $|A'|\geq k$.
Each successive agent in the permutation gets a most preferred outcome while ensuring that agents before him in the permutation get at least as preferred an outcome as before. Thus the algorithm is strategyproof and Pareto optimal with respect to the `worst' set extension.
																\end{proof}
								
	\section{Upward Lexicographic Set Extension}

	We point out that for dichotomous preferences, the responsive set extension coincides with the 
	upward lexicographic  set extension.
	Hence we get a corollary of our results for responsive preferences:

	\begin{corollary}\label{cor:verify-ul}
							Checking whether a committee is  $UL$-efficient
							 is coNP-complete, even for dichotomous preferences and
			$Topwidth(\pref) \geq 3$.
	\end{corollary}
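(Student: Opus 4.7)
The plan is to reduce the corollary directly to the responsive case (Theorem~\ref{theoParetoCommittee}), using the fact highlighted in the paragraph preceding the statement: when preferences are dichotomous, the UL extension and the RS extension coincide. So the first step is to verify this equivalence formally. Fix any agent $i$ with dichotomous preferences $(E_i^1,E_i^2)$ and two committees $W,V\in S_k(A)$. Since $|W|=|V|=k$, one has $|W\cap E_i^2|=k-|W\cap E_i^1|$ and similarly for $V$, so $W\succ_i^{UL}V$ iff $|W\cap E_i^2|<|V\cap E_i^2|$ iff $|W\cap E_i^1|>|V\cap E_i^1|$. On the RS side, an injection $f:V\to W$ with $f(a)\pref_i a$ exists iff every $a\in V\cap E_i^1$ can be mapped to some element of $W\cap E_i^1$, iff $|W\cap E_i^1|\ge|V\cap E_i^1|$. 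Hence $\pref_i^{UL}$ and $\pref_i^{RS}$ induce the same order on $S_k(A)$ whenever $\pref_i$ is dichotomous, and therefore a committee is UL-efficient in a dichotomous profile if and only if it is RS-efficient.

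Given this equivalence, the second step is to transfer both halves of the complexity statement. For coNP-hardness, observe that the instances built in the proof of Theorem~\ref{theoParetoCommittee} are already dichotomous with $Topwidth(\pref)=3$; applying the equivalence above, the same reduction from \textsc{Vertex Cover} certifies that deciding UL-efficiency is coNP-hard in exactly this restricted regime. For coNP membership, a succinct disqualifying certificate is a committee $W'\in S_k(A)$ together with, for each $i\in N$, the two multiplicity vectors $(|W\cap E_i^\ell|)_\ell$ and $(|W'\cap E_i^\ell|)_\ell$; checking the UL comparison amounts to scanning the equivalence classes from least preferred to most preferred for each agent, which is polynomial.

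There is essentially no new obstacle here: the only thing that could go wrong is the equivalence claim failing for some subtle reason, and the short calculation above rules that out because with only two equivalence classes the lexicographic tiebreak never gets past the first differing coordinate. The statement and proof mirror Corollary~\ref{cor:verify-dl} almost verbatim, with the trivial rephrasing that UL ranks by the size of the disapproved part rather than the approved part, which (at fixed committee size~$k$) is the same information.
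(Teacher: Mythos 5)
Your proposal is correct and follows the paper's own route: the paper likewise derives this as an immediate corollary of Theorem~\ref{theoParetoCommittee} by observing that for dichotomous preferences the UL and RS extensions coincide (your explicit verification of that coincidence, and of coNP membership, simply fills in details the paper leaves implicit).
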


Note that Algorithm~\ref{algo:comtRSD} returns a  $UL$-efficient  committee. The reason is that each agent in her turn refines the set of possible outcomes to her most preferred subset of outcomes. Each committee is the refined set is at least as preferred with respect to RS (and hence with respect to UL) to all committees in the set of possible outcomes. 

		\begin{theorem}\label{th:ul-compute}
			There exists a linear-time and strategyproof algorithm that returns a  $UL$-efficient committee.
			\end{theorem}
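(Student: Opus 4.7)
The plan is to show that Algorithm~\ref{algo:comtRSD}, already used in the proofs of Theorems~\ref{th:RS-ic} and~\ref{th:dl-compute}, additionally returns a $UL$-efficient committee, so no new construction is required.

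First I would establish a multiset invariant preserved across the algorithm's steps: after agent $\pi(i)$ has been processed, every committee still consistent with the refinement has the same profile $(|C\cap E_{\pi(j)}^{\ell}|)_{\ell}$ for each $j\le i$. The key reason is that $\pi(i)$'s turn deterministically absorbs the first $t-1$ of her equivalence classes intersected with the current pool $L$, fixes the number of free slots coming from $E_{\pi(i)}^{t}$, and rules out all later classes; subsequent refinements only shrink $L$ strictly inside $E_{\pi(i)}^{t}$, so the counts in $\pi(i)$'s classes can never change thereafter. Equal profiles entail $UL$-equivalence, so no committee surviving to the final candidate set can $UL$-dominate the output $W$.

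Next I would handle committees $T$ discarded at some step $i$. At that step $T$'s profile for $\pi(i)$ is strictly majorization-dominated by $W$'s, yielding $W \succ_{\pi(i)}^{RS} T$; invoking the Remark relating the five extensions, strict $RS$-preference implies strict $UL$-preference, so $\pi(i)$ strictly $UL$-prefers $W$ to $T$, ruling out $T$ as a $UL$-improvement. Combining the two cases shows that $W$ is $UL$-efficient. Strategyproofness and the linear running time will then follow immediately from Theorem~\ref{th:RS-ic}: the algorithm is strategyproof under $RS$, and weak $RS$-preference implies weak $UL$-preference, so it is also strategyproof under $UL$; the runtime analysis is unchanged.

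The main obstacle will be formulating the multiset invariant precisely---in particular, verifying that once $\pi(i)$ has fixed her equivalence-class counts, subsequent steps of the algorithm indeed operate entirely within $E_{\pi(i)}^{t_i}$ and so do not perturb those counts.
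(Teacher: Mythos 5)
Your proposal is correct and takes essentially the same route as the paper: the paper likewise reuses Algorithm~\ref{algo:comtRSD}, observing that each agent's refinement step keeps only committees that are at least as preferred under $RS$ (hence under $UL$) and discards only committees that are strictly worse for the refining agent, with strategyproofness and the linear running time inherited from Theorem~\ref{th:RS-ic}. Your multiset invariant simply makes the paper's brief justification explicit.
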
							

\section{Conclusions}

We considered Pareto optimality in multi-winner voting with respect to a number of prominent set extensions. We presented results on the relations between the notions as well as complexity of computing and verifying Pareto optimal outcomes.
Further directions of future work include considering Pareto optimality with respect to other set extensions~\citep{BrBr11a}. Another direction is consider the compatibility of Pareto optimality concepts with other axioms. 
Finally, we remark that our serial dictatorship algorithm can be used to define a multiwinner generalization of  random serial dictatorship,  which is worth investigating and raises interesting computational problems.

\section*{Acknowledgments}

This is the extended  version of the IJCAI conference paper~\citep{ALL16a}.
 The authors thank Felix Brandt for useful pointers and comments. They also thanks the reviewers and attendees of IJCAI 2016 and COMSOC 2016 for useful comments. 
 J\'er\^ome Lang and J\'er\^ome Monnot thank the ANR project CoCoRICo-CoDec. 

\bibliographystyle{plainnat}

%


\end{document}